\title{Algebraizability of the Logic of Quasi-N4-Lattices}
\author{Clodomir Silva Lima Neto\thanks{Clodomir Silva Lima Neto thanks the support of the Federal Institute of Education, Science and Technology of Ceará.}
\institute{UFRN \\ Natal, Brazil}
\institute{Department of Informatics and Applied Mathematics \\
Federal University of Rio Grande do Norte \\
Natal, Brazil}
\email{clodomir.neto@ifce.edu.br}
\and
Thiago Nascimento da Silva\thanks{Thiago Nascimento da Silva was financed in part by the Coordenação de Aperfeiçoamento de Pessoal de Nível Superior - Brasil (CAPES) - Finance Code 001.} \qquad \qquad Umberto Rivieccio
\institute{Federal University of Rio Grande do Norte \\ Natal, Brazil}
\email{\quad thiagnascsilva@gmail.com \qquad \qquad umberto.rivieccio@ufrn.br}
}
\newcommand{\nnot}{\mathop{\sim}}
\newcommand{\eq}{\leftrightarrow}
\theoremstyle{definition}
\newtheorem{dfn}{Definition}
\newtheorem{lem}{Lemma}
\newtheorem{prop}{Proposition}
\newtheorem{thm}{Theorem}
\newtheorem{corollary}{Corollary}
\begin{document}
\maketitle

\begin{abstract}
The class of quasi-N4-lattices (QN4-lattices) was introduced as a common generalization of quasi-Nelson algebras and N4-lattices, in such a way that N4-lattices are precisely the QN4-lattices satisfying the double negation law $(\nnot \nnot x = x)$ and quasi-Nelson algebras are the QN4-lattices satisfying the explosive law $(x \land \nnot x) \to y = ((x \land \nnot x) \to y) \to ((x \land \nnot x) \to y)$. In this paper we introduce,  via a Hilbert-style presentation, a logic $(\mathbf{L}_{\mathrm{QN4}})$ whose algebraic semantics is a class of algebras that we show to be term-equivalent to  QN4-lattices. The result is obtained by showing that the calculus introduced by us is algebraizable in the sense of Blok and Pigozzi, and its equivalent algebraic semantics is term-equivalent to  the class of QN4-lattices. As a prospect for future investigation, we  consider the question of how one could place $\mathbf{L}_{\mathrm{QN4}}$ within the family of relevance logics.
\end{abstract}

\textbf{Keywords}: Non-involutive. Paraconsistent Nelson. Twist-structures. Algebraizable logic.

\section{Introduction}
Nelson's constructive logic with strong negation  (N3), introduced in \cite{Nelson1949}, is a conservative expansion of the negation-free fragment of intuitionistic propositional logic by an unary logical connective $\nnot$ of strong negation (which is involutive and satisfies De Morgan's laws). Within the Nelson family, two mutually incomparable generalizations of N3 have been proposed so far, namely: (i) paraconsistent Nelson's logic (N4), which is a paraconsistent weakening of N3 (see~\cite{Almukdad1984}) that results from dropping the explosion axiom; (ii) the logic (dubbed quasi-Nelson logic) obtained from N3 by deleting the double negation law, which is also weaker than intuitionistic logic \cite{Rivieccio2018}.

The algebraic counterpart of N3 is the variety of  Nelson algebras. Rivieccio and Spinks \cite{Rivieccio2018} introduced quasi-Nelson algebras as a natural generalization of Nelson algebras in the sense that the negation $\nnot$ need not be involutive. The algebraic counterparts of paraconsistent Nelson's logic N4 and quasi-Nelson logic are, respectively, the variety  of N4-lattices and the variety of quasi-Nelson algebras. More recently, Rivieccio \cite{Rivieccio2022} introduced the class of quasi-N4-lattices (QN4-lattices) as a common generali\-zation of these two varieties.

The definition of QN4-lattices is such that N4-lattices turn out to be precisely the quasi-N4-lattices satisfying the double negation law, and quasi-Nelson algebras are precisely the QN4-lattices satisfying the explosive law. 

The language of quasi-N4-lattices includes two implication connectives, the strong implication $(\Rightarrow)$ that forms a residuated pair together with the strong conjunction ($*$), and the weak implication $(\to)$ that enjoys the standard version of the Deduction Theorem; the former is definable from the latter (and the lattice meet) by $x \Rightarrow y := (x \to y) \to (\nnot y \to \nnot x)$. It is the strong implication that determines the lattice order on a quasi-N4-lattice; it follows that an equivalence connective can be defined by $x \Leftrightarrow y := (x \Rightarrow y) \land (y \Rightarrow x)$.

Nelson algebras, quasi-Nelson algebras and N4-lattices can be represented via a construction known as twist-structure. As shown in~\cite{Rivieccio2022}, the class of QN4-lattices also admits a twist-structure representation  analogous to the above-mentioned ones, though the  construction needs to be generalized to account for both the non-involutivity of the negation and the lack of the explosive law. To accomplish this, this representation employs twist-structures defined over Brouwerian algebras enriched with a nucleus operator. 

In this contribution we are going to introduce a logic $(\mathbf{L}_{\mathrm{QN4}})$ via a Hilbert-style and show that $\mathbf{L}_{\mathrm{QN4}}$ is algebraizable in the sense of Blok and Pigozzi. We will then prove that the equivalent algebraic semantics of $\mathbf{L}_{\mathrm{QN4}}$ is term-equivalent to the class of QN4-lattices.

The paper is organized as follows. In Section \ref{preliminaries} we recall some basic definitions and results about quasi-N4-lattices. Section \ref{calculus} introduces $\mathbf{L}_{\mathrm{QN4}}$ through  a Hilbert-style calculus. In Section \ref{bpa} we prove that $\mathbf{L}_{\mathrm{QN4}}$ is algebraizable. In Section \ref{semantics} we show that the algebraic counterpart of $\mathbf{L}_{\mathrm{QN4}}$ is term-equivalent to the class of quasi-N4-lattices. In the final Section \ref{future} we mention some prospects for future work.

\section{Preliminaries} 
\label{preliminaries}
In this section we recall two equivalent presentations of quasi-N4-lattices; these will be used to establish the equivalence between the two alternative algebraic semantics for the logic $\mathbf{L}_{\mathrm{QN4}}$, which is introduced in the next section. 

\begin{dfn}
A Brouwerian algebra is an algebra $\mathbf{B} = \langle B; \land, \lor, \to \rangle$ such that $\langle B; \land, \lor \rangle$ is a lattice with order $\leq$ and $\to$ is the residuum of $\land$, that is, $a \land b \leq c$ iff $a \leq b \to c$, for all $a, b, c \in B$.
\end{dfn}

As is well known, Brouwerian algebras are precisely the bottom-free subreducts of Heyting algebras -- the algebraic counterpart of intuitionistic logic. 

\begin{dfn}(\cite{Rivieccio2022}, Definition 2.1)
Given a Brouwerian algebra $\mathbf{B} = \langle B; \land, \lor, \to \rangle$, we say that a unary operator $\Box: B \to B$ is a nucleus if, for all $a, b \in B$,
\begin{enumerate}
\item $\Box (a \land b) = \Box a \land \Box b$.
\item $a \leq \Box a = \Box \Box a$.
\end{enumerate}
\end{dfn}

We shall refer to an algebra $\mathbf{B} = \langle B; \land, \lor, \to, \Box \rangle$ as to a nuclear Brouwerian algebra. 

\begin{dfn}(\cite{Rivieccio2022}, Definition 2.2)\label{twist}
Let $\mathbf{B} = \langle B; \land, \lor, \to, \Box \rangle$ be a nuclear Brouwerian algebra. The algebra $\mathbf{B}^{\Join} = \langle B \times B; \land, \lor, \to, \nnot \rangle$ is defined as follows. For all $\langle a_{1}, a_{2} \rangle,  \langle b_{1}, b_{2} \rangle \in B \times B$,
\begin{align*}
\nnot \langle a_{1}, a_{2} \rangle &= \langle a_{2}, \Box a_{1} \rangle \\
\langle a_{1}, a_{2} \rangle \land \langle b_{1}, b_{2} \rangle &= \langle a_{1} \land b_{1} , \Box (a_{2} \lor b_{2}) \rangle  \\
\langle a_{1}, a_{2} \rangle \lor \langle b_{1}, b_{2} \rangle &= \langle a_{1} \lor b_{1} , a_{2} \land b_{2} \rangle \\
\langle a_{1}, a_{2} \rangle \to \langle b_{1}, b_{2} \rangle &= \langle a_{1} \to b_{1} , \Box a_{1} \land b_{2} \rangle
\end{align*}
A quasi-N4 twist-structure $\mathbf{A}$ over $\mathbf{B}$ is a subalgebra of $\mathbf{B}^{\Join}$ satisfying the following properties: $\pi_{1}[A] = B$ and $\Box a_{2} = a_{2}$ for all $\langle a_{1}, a_{2} \rangle \in A$, where $\pi_{1}$ denote the first projection function.
\end{dfn}

Given an algebra $\mathbf{A}$ having an operation $\to$ and elements $a, b \in A$, we shall abbreviate $|a| := a \to a$, and define the relations $\equiv$ and $\preceq$ as follows. We let $a \preceq b$ iff $a \to b = |a \to b|$, and $\equiv := \preceq \cap (\preceq)^{-1}$. Thus one has $a \equiv b$ iff ($a \preceq b$ and $b \preceq a$).

\begin{dfn}(\cite{Rivieccio2022}, Definition 3.2) \label{qn4lattice}
A quasi-N4-lattice (QN4-lattice) is an algebra $\mathbf{A} = \langle A; \land, \lor, \to, \nnot \rangle$ of type $\langle 2, 2, 2, 1 \rangle$ satisfying the following properties:
\begin{description}
\item[(QN4a)] The reduct $\langle A; \land, \lor \rangle$ is a distributive lattice with lattice order $\leq$.
\item[(QN4b)] The relation $\equiv := \preceq \cap \, (\preceq)^{-1}$ is a congruence on the reduct $\langle A; \land, \lor, \to \rangle$ and the quotient $B(\mathbf{A}) = \langle A; \land, \lor, \to \rangle / \equiv$ is a Brouwerian algebra. The operator $\Box$ given by $\Box[a] := \nnot \nnot a / \equiv$ for all $a \in A$ is a nucleus, so the algebra $\langle B(\mathbf{A}), \Box \rangle$ is a nuclear Brouwerian algebra.
\item[(QN4c)] For all $a, b \in A$, it holds that $a \leq b$ iff $a \preceq b$ and $\nnot b \preceq \nnot a$.
\item[(QN4d)] For all $a, b \in A$, it holds that $\nnot (a \to b) \equiv \nnot \nnot (a \, \land \nnot b)$.
\item[(QN4e)] For all $a, b \in A$,
\begin{description}
\item[(QN4e.1)] $a \leq \nnot \nnot a$.
\item[(QN4e.2)] $\nnot a = \nnot \nnot \nnot a$.
\item[(QN4e.3)] $\nnot (a \lor b) = \nnot a \, \land \nnot b$.
\item[(QN4e.4)] $\nnot \nnot a \land \nnot \nnot b = \nnot \nnot (a \land b)$.
\end{description}
\end{description}
\label{qn4l}
\end{dfn}

The preceding definition is a straightforward generalization of Odintsov's~\cite{Odintsov2003} definition of N4-lattices; indeed, as observed in~\cite[Proposition 3.8]{Rivieccio2022}, a quasi-N4-lattice  $\mathbf{A}$ is an N4-lattice if and only if $\mathbf{A}$ is involutive, that is, 
$\nnot \nnot a \leq a $ for all $a \in A$. Similarly, a quasi-Nelson algebra
may be defined as a quasi-N4-lattice $\mathbf{A}$ that satisfies the explosive equality, $a \land \nnot a \preceq b $ for all $a,b \in A$. 

\begin{thm}(\cite{Rivieccio2022}, Theorem 3.3)
Every quasi-N4-lattice $\mathbf{A}$ is isomorphic to a twist-structure over $\langle B(\mathbf{A}), \Box \rangle$ by the map $\iota: A \to A / \equiv \times A / \equiv$ given by $\iota(a) := \langle a / \equiv, \nnot a / \equiv \rangle$ for all $a \in A$.
\label{thmrep}
\end{thm}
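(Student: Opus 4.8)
The plan is to verify that the map $\iota: A \to A/\!\equiv \times A/\!\equiv$ given by $\iota(a) := \langle a/\!\equiv, \nnot a/\!\equiv \rangle$ is an injective homomorphism whose image is a quasi-N4 twist-structure over $\langle B(\mathbf{A}), \Box \rangle$ in the sense of Definition~\ref{twist}. First I would record the basic facts that make the target well-defined: by (QN4b) the quotient $B(\mathbf{A}) = \langle A; \land, \lor, \to\rangle/\!\equiv$ is a Brouwerian algebra and $\Box[a] := \nnot\nnot a/\!\equiv$ is a nucleus on it, so $\mathbf{B} := \langle B(\mathbf{A}), \Box\rangle$ is a nuclear Brouwerian algebra and $\mathbf{B}^{\Join}$ is defined. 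Next, injectivity of $\iota$: if $\iota(a) = \iota(b)$ then $a \equiv b$ and $\nnot a \equiv \nnot b$, so $a \preceq b$, $b \preceq a$, $\nnot b \preceq \nnot a$ and $\nnot a \preceq \nnot b$; by (QN4c) this yields $a \leq b$ and $b \leq a$, hence $a = b$ since $\leq$ is a lattice order by (QN4a).

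The central work is checking that $\iota$ preserves the four operations, i.e. that for each connective $\iota$ commutes with the twist-structure operations of Definition~\ref{twist}. For negation one must show $\iota(\nnot a) = \nnot \iota(a)$, i.e. $\langle \nnot a/\!\equiv, \nnot\nnot a/\!\equiv\rangle = \langle \nnot a/\!\equiv, \Box(a/\!\equiv)\rangle$; the first coordinate is immediate and the second is exactly the definition $\Box[a] = \nnot\nnot a/\!\equiv$. For the lattice meet, $\iota(a\land b)$ should equal $\langle (a\land b)/\!\equiv, \Box(\nnot a/\!\equiv \lor \nnot b/\!\equiv)\rangle$; the first coordinate needs that $\equiv$ is compatible with $\land$ (part of (QN4b)) and that $a \land b$ behaves as the meet in the quotient, while the second coordinate reduces to showing $\nnot(a\land b) \equiv \nnot\nnot(\nnot a \lor \nnot b)$, which should follow from (QN4e.3), (QN4e.4) and (QN4e.2) after rewriting $\nnot\nnot(\nnot a \lor \nnot b)$ via De Morgan. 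The join case is analogous but easier, using (QN4e.3) directly for the second coordinate. For the weak implication, $\iota(a \to b)$ should equal $\langle (a\to b)/\!\equiv, \Box(a/\!\equiv) \land (\nnot b/\!\equiv)\rangle$; the first coordinate uses that $\to$ descends to the Brouwerian residuum in $B(\mathbf{A})$, and the second coordinate amounts to $\nnot(a\to b) \equiv \nnot\nnot a \land \nnot b$, which is precisely (QN4d) combined with (QN4e.4) (rewriting $\nnot\nnot(a \land \nnot b)$ as $\nnot\nnot a \land \nnot\nnot\nnot b = \nnot\nnot a \land \nnot b$ using (QN4e.2)).

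Finally I would verify that $\iota[A]$ really is a quasi-N4 twist-structure, not merely a subalgebra of $\mathbf{B}^{\Join}$: one needs $\pi_1[\iota[A]] = B(\mathbf{A})$, which is immediate since every element of $B(\mathbf{A})$ has the form $a/\!\equiv$ for some $a \in A$ and $\pi_1(\iota(a)) = a/\!\equiv$; and one needs $\Box a_2 = a_2$ for every pair $\langle a_1, a_2\rangle \in \iota[A]$, i.e. $\Box(\nnot a/\!\equiv) = \nnot a/\!\equiv$, which unwinds to $\nnot\nnot\nnot a \equiv \nnot a$, exactly (QN4e.2). I expect the main obstacle to be the second-coordinate computations for $\land$ and $\to$: these require assembling several of the (QN4e) identities together with (QN4d) and the congruence properties from (QN4b), and care is needed because the second coordinate of the twist meet carries an extra $\Box$ that must be matched against the $\nnot\nnot$ appearing after De Morgan rewriting. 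Everything else is bookkeeping that follows directly from the definitions and the already-established fact that $B(\mathbf{A})$ is a nuclear Brouwerian algebra.
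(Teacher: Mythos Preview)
Your proposal is correct and follows the natural route: injectivity via (QN4c), preservation of each connective by unwinding the second coordinate using the De Morgan-type identities (QN4d)–(QN4e), and the two twist-structure side conditions via surjectivity of the quotient map and (QN4e.2). Note, however, that the present paper does not supply its own proof of this theorem---it is quoted from~\cite{Rivieccio2022} (Theorem~3.3) and used as a black box---so there is no in-paper argument to compare against; your sketch is essentially the standard verification one would expect for a twist-representation result of this kind.
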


In the proposition below we see that the non-equational presentation for QN4-lattices given in Definition \ref{qn4l} can be replaced with an equational one, entailing that QN4-lattices form a variety of algebras.

\begin{prop}(\cite{Rivieccio2022}, Proposition 3.7) Items \textbf{(QN4b)} and \textbf{(QN4c)} in Definition \ref{qn4l} can be equivalently replaced by the following identities:
\begin{enumerate}
\item $| x | \to y \approx y$.
\item $(x \land y) \to x \approx | (x \land y) \to x |$.
\item $(x \land y) \to z \approx x \to (y \to z)$.
\item $(x \Leftrightarrow y) \to x \approx (x \Leftrightarrow y) \to y$.
\item $(x \lor y) \to z \approx (x \to z) \land (y \to z)$.
\item $x \to (y \land z) \approx (x \to y) \land (x \to z)$.
\item $(x \to y) \land (y \to z) \preceq x \to z$.
\item $x \to y \preceq x \to (y \lor z)$.
\item $x \to (y \to z) \approx (x \to y) \to (x \to z)$.
\item $x \to y \preceq \nnot \nnot x \to \nnot \nnot y$.
\end{enumerate}
\label{variety}
\end{prop}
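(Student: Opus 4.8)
\section*{Proof proposal}

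The plan is to prove the two implications separately. Throughout I would keep in mind that each clause of the form ``$s \preceq t$'' or ``$s \equiv t$'' occurring in (1)--(10) is an honest equation in the language $\langle \land, \lor, \to, \nnot\rangle$, since $s \preceq t$ abbreviates $s \to t \approx |s \to t| = (s \to t) \to (s \to t)$; so the list (1)--(10) really is a set of identities and the statement says in particular that \textbf{QN4} is a variety. One direction is to show that every algebra satisfying \textbf{(QN4a)}--\textbf{(QN4e)} satisfies (1)--(10); the other is to show that \textbf{(QN4a)}, \textbf{(QN4d)}, \textbf{(QN4e)} together with (1)--(10) entail \textbf{(QN4b)} and \textbf{(QN4c)}.

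For the left-to-right implication I would exploit the representation theorem (Theorem~\ref{thmrep}): any QN4-lattice is isomorphic to a quasi-N4 twist-structure $\mathbf{A} \leq \mathbf{B}^{\Join}$ over some nuclear Brouwerian algebra $\mathbf{B}$, so, since identities are inherited by subalgebras, it suffices to verify (1)--(10) inside such twist-structures. Each identity then reduces to a componentwise computation in $\mathbf{B}$, using the explicit definitions of $\nnot,\land,\lor,\to$ in Definition~\ref{twist}, the nucleus laws $\Box(a\land b)=\Box a\land\Box b$ and $a\le\Box a=\Box\Box a$, and the defining constraints $\pi_1[A]=B$ and $\Box a_2 = a_2$ for all $\langle a_1,a_2\rangle\in A$. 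For example, one first records the normal forms $\nnot\langle a_1,a_2\rangle = \langle a_2,\Box a_1\rangle$ and $|\langle a_1,a_2\rangle| = \langle \top, a_2\rangle$ and the fact that $\langle\top,c\rangle\to\langle b_1,b_2\rangle = \langle b_1,b_2\rangle$ (using $\Box\top=\top$), which immediately yields (1); the other items are analogous, if tedious, case checks, with the second-coordinate identities --- (2), (4), (7), (8), (10) in particular --- being the ones that consume the hypothesis $\Box a_2 = a_2$.

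For the converse I would build \textbf{(QN4b)} and \textbf{(QN4c)} through a chain of small lemmas. First, from (1) one gets $|a|\to|a|=|a|$, hence $a\to a = |a\to a|$, i.e.\ reflexivity of $\preceq$; transitivity of $\preceq$ then follows from (7) together with the elementary behaviour of $|{-}|$ coming from (1), so $\equiv$ is an equivalence relation and all the elements $|a|$ collapse to a single $\equiv$-class, the prospective top. Next, compatibility of $\equiv$ with $\lor$ follows from (5) and (8), with $\land$ from (6) (with (2) locating a meet below its components), and with $\to$ in both arguments from (9), (7) and the monotonicity consequences of (7)--(8); thus $\equiv$ is a congruence on $\langle A;\land,\lor,\to\rangle$. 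Then I would check that the quotient order agrees with the relation induced by $\preceq$, that it is a distributive lattice (distributivity descending from \textbf{(QN4a)}), and that residuation holds --- this is essentially identity (3), $(x\land y)\to z\approx x\to(y\to z)$, read modulo $\equiv$, with the top witnessed by (1); hence $B(\mathbf{A})$ is Brouwerian. For the nucleus, $\Box[a]:=\nnot\nnot a/{\equiv}$ is well defined because (10) forces $a\preceq b$ to imply $\nnot\nnot a\preceq\nnot\nnot b$, condition~1 of a nucleus is precisely \textbf{(QN4e.4)}, and condition~2 follows from \textbf{(QN4e.1)} and \textbf{(QN4e.2)} (the latter applied under $\nnot$ to obtain $\nnot\nnot a = \nnot\nnot\nnot\nnot a$). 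This gives \textbf{(QN4b)}. Finally, for \textbf{(QN4c)}: the passage from $a\leq b$ to ($a\preceq b$ and $\nnot b\preceq\nnot a$) uses that $\leq$ refines $\preceq$ and that $a\leq b$ yields $\nnot b = \nnot a\land\nnot b$ via \textbf{(QN4e.3)}; the reverse implication is where identity (4), $(x\Leftrightarrow y)\to x\approx (x\Leftrightarrow y)\to y$, is used to pin down the lattice order from $\preceq$ and $\nnot$.

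The main obstacle I anticipate is exactly this last point, together with the congruence property: showing that $\equiv$ respects $\to$ --- especially in the antecedent position --- and that the order extracted from $\preceq$ and the behaviour of $\nnot$ reconstructs the original lattice order (the nontrivial half of \textbf{(QN4c)}) requires using (4) in an essential and somewhat non-obvious way, and it demands careful bookkeeping in moving between genuine equalities and $\equiv$-equalities at each step. The twist-structure computations underpinning the forward direction, by contrast, are routine once the correct normal forms for $|a|$, $\nnot a$ and the binary connectives are in place.
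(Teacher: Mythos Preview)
The paper does not actually prove this proposition: it is quoted verbatim from \cite{Rivieccio2022} (Proposition~3.7) and stated without proof, so there is no ``paper's own proof'' to compare your attempt against. Your proposal is therefore not a re-derivation of anything in this paper but a sketch of how the cited result could be established.

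That said, the sketch is sound and matches the natural strategy. For the forward direction, invoking the twist representation (Theorem~\ref{thmrep}) and checking (1)--(10) componentwise in $\mathbf{B}^{\Join}$ is exactly right; note only that your normal form $|\langle a_1,a_2\rangle| = \langle \top, a_2\rangle$ is not literally correct (the second coordinate is $\Box a_1 \land a_2$), but this is harmless because the first coordinate alone drives the computation of $|x|\to y$. For the converse, your decomposition --- reflexivity and transitivity of $\preceq$ from (1) and (7), compatibility of $\equiv$ with $\land,\lor,\to$ from (2), (5), (6), (8), (9), residuation on the quotient from (3), the nucleus from (10) and \textbf{(QN4e)}, and finally \textbf{(QN4c)} from (4) --- is the expected route, and you correctly flag that compatibility of $\equiv$ with $\to$ in the antecedent position and the nontrivial half of \textbf{(QN4c)} are the steps requiring real care. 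There is no visible gap in the plan; what remains is bookkeeping rather than a missing idea.
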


\section{A Hilbert calculus for $\mathbf{L}_{\mathrm{QN4}}$}
\label{calculus}

In this section we introduce a Hilbert-style calculus that determines a logic, in sense of \cite{Hamilton1998}, henceforth denoted by $\mathbf{L}_{\mathrm{QN4}}$. Moreover, we highlight some theorems and derivations of $\mathbf{L}_{\mathrm{QN4}}$ that will be used to prove its algebraizability in subsequent sections.

Fix a denumerable set $\mathcal{P}$ of propositional variables, and let $p \in \mathcal{P}$. The language $\mathcal{L}$ of QN4-lattice over $\mathcal{P}$ is defined recursively as follows:
$$
\alpha ::= p \mid \, \nnot \alpha \mid (\alpha \land \alpha) \mid (\alpha \lor \alpha) \mid (\alpha \to \alpha)
$$

To simplify the notation, in what follows, we omit the outmost parenthesis. We also abbreviate $\alpha \leftrightarrow \beta := (\alpha \to \beta) \land (\beta \to \alpha)$. We use $\mathit{F_{\mathcal{P}}}$ to denote the set of all formulas. A \textit{logic} is then defined as a finitary and substitution-invariant consequence relation $\vdash \subseteq \wp (\mathit{F_{\mathcal{P}}}) \times \mathit{F_{\mathcal{P}}}$. The Hilbert-system for $\mathbf{L}_{\mathrm{QN4}}$ consists of the following axiom schemes together with the single inference rule of \textit{modus ponens} (MP): $\alpha, \alpha \to \beta \vdash \beta$.

\begin{description}
\item[$\mathbf{Ax1}$]  $\alpha \to (\beta \to \alpha)$
\item[$\mathbf{Ax2}$] $(\alpha \to (\beta \to \gamma)) \to ((\alpha \to \beta) \to (\alpha \to \gamma))$
\item[$\mathbf{Ax3}$] $(\alpha \land \beta) \to \alpha$
\item[$\mathbf{Ax4}$] $(\alpha \land \beta) \to \beta$
\item[$\mathbf{Ax5}$] $(\alpha \to \beta) \to ((\alpha \to \gamma) \to (\alpha \to (\beta \land \gamma)))$
\item[$\mathbf{Ax6}$] $\alpha \to (\alpha \lor \beta)$
\item[$\mathbf{Ax7}$] $\beta \to (\alpha \lor \beta)$
\item[$\mathbf{Ax8}$] $(\alpha \to \gamma) \to ((\beta \to \gamma) \to ((\alpha \lor \beta) \to \gamma))$
\item[$\mathbf{Ax9}$] $\nnot(\alpha \lor \beta) \eq (\nnot \alpha \land \nnot \beta)$
\item[$\mathbf{Ax10}$] $\nnot (\alpha \to \beta) \eq \nnot \nnot (\alpha \land \nnot \beta)$
\item[$\mathbf{Ax11}$] $\nnot(\alpha \land (\beta \land \gamma)) \eq \nnot((\alpha \land \beta) \land \gamma)$
\item[$\mathbf{Ax12}$] $\nnot(\alpha \land (\beta \lor \gamma)) \eq \nnot((\alpha \land \beta) \lor (\alpha \land \gamma))$
\item[$\mathbf{Ax13}$] $\nnot(\alpha \lor (\beta \land \gamma)) \eq \nnot((\alpha \lor \beta) \land (\alpha \lor \gamma))$
\item[$\mathbf{Ax14}$] $\nnot \nnot(\alpha \land \beta) \eq (\nnot \nnot \alpha \land \nnot \nnot \beta)$
\item[$\mathbf{Ax15}$] $\alpha \to \nnot \nnot \alpha$
\item[$\mathbf{Ax16}$] $\alpha \to (\nnot \alpha \to \nnot (\alpha \to \alpha))$
\item[$\mathbf{Ax17}$] $(\alpha \to \beta) \to (\nnot \nnot \alpha \to \nnot \nnot \beta)$
\item[$\mathbf{Ax18}$] $\nnot \alpha \to \nnot(\alpha \land \beta)$
\item[$\mathbf{Ax19}$] $\nnot(\alpha \land \beta) \to \nnot(\beta \land \alpha)$
\item[$\mathbf{Ax20}$] $(\nnot \alpha \to \nnot \beta) \to (\nnot (\alpha \land \beta) \to \nnot \beta)$	
\item[$\mathbf{Ax21}$] $(\nnot \alpha \to \nnot \beta) \to ((\nnot \gamma \to \nnot \theta) \to (\nnot(\alpha \land \gamma) \to \nnot(\beta \land \theta)))$
\item[$\mathbf{Ax22}$] $\nnot \nnot \nnot \alpha \to \nnot \alpha$
\end{description}

Axioms $\mathbf{Ax1}$-$\mathbf{Ax8}$ together with modus ponens constitute an axiomatization of the Positive Logic ($\mathbf{Lp}$). We started with them and choose between the axioms of quasi-Nelson logic \cite{Liang2019} the ones that were sound with respect to QN4-lattices, then we added remaining axioms necessary to prove that our calculus is algebraizable and that its equivalent algebraic semantics is the class of QN4-lattices as defined in Definition \ref{qn4lattice}.

By the usual inductive argument on the length of derivations, it is not difficult to prove that the deduction theorem holds for $\mathbf{L}_{\mathrm{QN4}}$.

\begin{thm}
\textbf{(Deduction Theorem)} If $\Phi \cup \{ \alpha \} \vdash_{\mathbf{L}_{\mathrm{QN4}}} \beta$, then $\Phi \vdash_{\mathbf{L}_{\mathrm{QN4}}} \alpha \to \beta$.
\end{thm}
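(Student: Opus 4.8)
The plan is to run the standard Hilbert-style argument, by induction on the length of a derivation, using nothing beyond the presence of $\mathbf{Ax1}$ and $\mathbf{Ax2}$ and the fact that modus ponens is the sole inference rule of the calculus.

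First I would record the auxiliary theorem $\vdash_{\mathbf{L}_{\mathrm{QN4}}} \alpha \to \alpha$ for every formula $\alpha$: this follows in three steps from the $\mathbf{Ax2}$-instance $(\alpha \to ((\alpha \to \alpha) \to \alpha)) \to ((\alpha \to (\alpha \to \alpha)) \to (\alpha \to \alpha))$ together with the $\mathbf{Ax1}$-instances $\alpha \to ((\alpha \to \alpha) \to \alpha)$ and $\alpha \to (\alpha \to \alpha)$, by two applications of (MP). Then, assuming $\Phi \cup \{\alpha\} \vdash_{\mathbf{L}_{\mathrm{QN4}}} \beta$, I would fix a derivation $\gamma_1, \dots, \gamma_n = \beta$ witnessing this and prove by induction on $i \leq n$ that $\Phi \vdash_{\mathbf{L}_{\mathrm{QN4}}} \alpha \to \gamma_i$.

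There are three cases to treat in the inductive step. If $\gamma_i$ is an axiom or a member of $\Phi$, then $\Phi \vdash_{\mathbf{L}_{\mathrm{QN4}}} \gamma_i$, and from the $\mathbf{Ax1}$-instance $\gamma_i \to (\alpha \to \gamma_i)$ and (MP) we obtain $\Phi \vdash_{\mathbf{L}_{\mathrm{QN4}}} \alpha \to \gamma_i$. If $\gamma_i$ is $\alpha$ itself, the auxiliary theorem gives $\Phi \vdash_{\mathbf{L}_{\mathrm{QN4}}} \alpha \to \alpha$ directly. Finally, if $\gamma_i$ is obtained by (MP) from some $\gamma_j$ and $\gamma_k = \gamma_j \to \gamma_i$ with $j, k < i$, then the induction hypothesis yields $\Phi \vdash_{\mathbf{L}_{\mathrm{QN4}}} \alpha \to \gamma_j$ and $\Phi \vdash_{\mathbf{L}_{\mathrm{QN4}}} \alpha \to (\gamma_j \to \gamma_i)$, and applying the $\mathbf{Ax2}$-instance $(\alpha \to (\gamma_j \to \gamma_i)) \to ((\alpha \to \gamma_j) \to (\alpha \to \gamma_i))$ and (MP) twice gives $\Phi \vdash_{\mathbf{L}_{\mathrm{QN4}}} \alpha \to \gamma_i$. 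Taking $i = n$ yields the theorem.

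I do not expect a genuine obstacle: the argument is insensitive to all the axioms beyond $\mathbf{Ax1}$ and $\mathbf{Ax2}$, and the only point requiring a little care is that the single inference rule is modus ponens, so no additional rule-cases (in particular, no necessitation-style rule for $\nnot$) need to be handled in the inductive step. The converse implication, although not part of the statement, holds trivially by one application of (MP).
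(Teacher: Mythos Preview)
Your proposal is correct and matches what the paper intends: the paper does not spell out a proof but simply remarks that the Deduction Theorem follows ``by the usual inductive argument on the length of derivations,'' which is precisely the argument you have written out using $\mathbf{Ax1}$, $\mathbf{Ax2}$, and the fact that modus ponens is the only rule.
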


The following lemma is an immediate consequence of the Deduction Theorem. 

\begin{lem}
If $\alpha, \beta, \gamma \in \mathcal{L}$ then 
\begin{enumerate}
\item $\vdash_{\mathbf{L}_{\mathrm{QN4}}} \alpha \to \alpha$.
\item $\{ \alpha \to \beta, \beta \to \gamma \} \vdash_{\mathbf{L}_{\mathrm{QN4}}} \alpha \to \gamma$.
\end{enumerate}
\label{lema}
\end{lem}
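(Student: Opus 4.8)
The plan is to derive both items directly from the Deduction Theorem, which has already been established. For item 1, I would start from the instance of $\mathbf{Ax1}$ given by $\alpha \to ((\alpha \to \alpha) \to \alpha)$ together with the instance of $\mathbf{Ax2}$ given by $(\alpha \to ((\alpha \to \alpha) \to \alpha)) \to ((\alpha \to (\alpha \to \alpha)) \to (\alpha \to \alpha))$; two applications of modus ponens (the first using $\mathbf{Ax1}$ in the form $\alpha \to (\alpha \to \alpha)$) then yield $\vdash_{\mathbf{L}_{\mathrm{QN4}}} \alpha \to \alpha$. Alternatively, and more in the spirit of the remark preceding the lemma, one simply notes that $\{\alpha\} \vdash_{\mathbf{L}_{\mathrm{QN4}}} \alpha$ trivially, so the Deduction Theorem (with $\Phi = \emptyset$) gives $\vdash_{\mathbf{L}_{\mathrm{QN4}}} \alpha \to \alpha$. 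I would present this second argument, since it is the shortest and explains the phrase ``immediate consequence''.

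For item 2, I would show $\{\alpha \to \beta, \beta \to \gamma\} \cup \{\alpha\} \vdash_{\mathbf{L}_{\mathrm{QN4}}} \gamma$ and then apply the Deduction Theorem to discharge $\alpha$. The derivation is the obvious one: from $\alpha$ and $\alpha \to \beta$ infer $\beta$ by modus ponens, then from $\beta$ and $\beta \to \gamma$ infer $\gamma$ by modus ponens. Hence $\{\alpha \to \beta, \beta \to \gamma\} \cup \{\alpha\} \vdash_{\mathbf{L}_{\mathrm{QN4}}} \gamma$, and the Deduction Theorem with $\Phi = \{\alpha \to \beta, \beta \to \gamma\}$ yields $\{\alpha \to \beta, \beta \to \gamma\} \vdash_{\mathbf{L}_{\mathrm{QN4}}} \alpha \to \gamma$, as required.

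There is essentially no obstacle here: both parts are routine once the Deduction Theorem is available, and indeed the lemma is stated precisely as a corollary of it. The only thing to be slightly careful about is the bookkeeping of which set of hypotheses plays the role of $\Phi$ in each application of the Deduction Theorem, so that the conclusion comes out with the intended antecedent; this is purely a matter of writing the two short derivations cleanly.
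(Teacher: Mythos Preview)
Your proposal is correct and matches the paper's approach: the paper simply states that the lemma is an immediate consequence of the Deduction Theorem, which is exactly the argument you spell out (trivially $\{\alpha\}\vdash\alpha$ for item~1, and two applications of modus ponens followed by discharging $\alpha$ for item~2). Your alternative direct derivation of $\alpha\to\alpha$ from $\mathbf{Ax1}$ and $\mathbf{Ax2}$ is also fine, but the Deduction-Theorem route is the one the paper has in mind.
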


\section{$\mathbf{L}_{\mathrm{QN4}}$ is BP-Algebraizable}
\label{bpa}
In this section we prove that the calculus introduced in the previous section is algebraizable in sense of Blok and Pigozzi. Using this result, we will axiomatize the equivalent algebraic semantics of $\mathbf{L}_{\mathrm{QN4}}$ via the algorithm of (\cite{Blok2014}, Theorem 2.17) and  show that is term-equivalent to the class of QN4-lattices.

Given the formula algebra \textbf{Fm}, the associated set of equations of the language $\mathcal{L}$ is denoted by $\mathit{Eq}$ and is defined as $\mathit{Eq} := \mathit{F_m} \times \mathit{F_m}$. Following standard usage, we denote an equation $(\alpha, \beta)$ as $\alpha \approx \beta$.

\begin{thm}
A logic $\mathbf{L}$ is algebraizable if and only if there are a set of equations $E(\alpha) \subseteq \mathit{Eq}$ and a set of formulas $\Delta(\alpha, \beta) \subseteq \mathit{F_m}$, such that  the following conditions hold:
\begin{description}
\item[$(\mathbf{Ref})$] $\vdash_{\mathbf{L}} \Delta(\alpha, \alpha)$
\item[$(\mathbf{Sym})$] $\Delta(\alpha, \beta) \vdash_{\mathbf{L}} \Delta(\beta, \alpha)$
\item[$(\mathbf{Trans})$] $\Delta(\alpha, \beta)  \cup \Delta(\beta, \gamma) \vdash_{\mathbf{L}} \Delta(\alpha, \gamma)$
\item[$(\mathbf{Alg})$] $\alpha \dashv \vdash_{\mathbf{L}} \Delta(E(\alpha))$
\item[$(\mathbf{Cong})$] for each n-ary connective $\bullet$, $\displaystyle \bigcup^{n}_{i =1} \Delta(\alpha_i, \beta_i) \vdash_{\mathbf{L}} \Delta(\bullet(\alpha_1, \ldots, \alpha_n), \bullet(\beta_1, \ldots, \beta_n))$.
\end{description}
\label{algebraizable}
\end{thm}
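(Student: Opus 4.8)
The plan is to recall that this is the standard Blok--Pigozzi characterization of algebraizable logics via the so-called \emph{intrinsic} conditions, so the proof is essentially a matter of unwinding the definitions of algebraizability (existence of equivalent algebraic semantics with mutually inverse translations $\tau\colon \mathit{Fm}\to\wp(\mathit{Eq})$ and $\rho\colon \mathit{Eq}\to\wp(\mathit{Fm})$) and showing it coincides with the existence of $E(\alpha)$ and $\Delta(\alpha,\beta)$ satisfying $(\mathbf{Ref})$, $(\mathbf{Sym})$, $(\mathbf{Trans})$, $(\mathbf{Alg})$ and $(\mathbf{Cong})$. First I would fix the dictionary: given an algebraizable logic with defining equations $E$ and equivalence formulas $\Delta$, set $\tau(\alpha):=E(\alpha)$ and $\rho(\alpha\approx\beta):=\Delta(\alpha,\beta)$; conversely, from a pair $(\tau,\rho)$ witnessing algebraizability one extracts $E$ and $\Delta$ by applying $\tau$ and $\rho$ to a single variable (respectively a single equation in two variables), using substitution-invariance to recover the general case.

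For the forward direction, I would assume $\mathbf{L}$ is algebraizable with equivalent algebraic semantics $\mathsf{K}$, and derive each of the five conditions. The reflexivity, symmetry and transitivity of $\Delta$ follow from the fact that $\rho$ translates the equational consequence relation $\models_{\mathsf{K}}$ faithfully and that $\approx$ is reflexive, symmetric and transitive in every algebra; concretely, $\models_{\mathsf{K}}\alpha\approx\alpha$, $\alpha\approx\beta\models_{\mathsf{K}}\beta\approx\alpha$, and $\{\alpha\approx\beta,\beta\approx\gamma\}\models_{\mathsf{K}}\alpha\approx\gamma$ translate under $\rho$ to $(\mathbf{Ref})$, $(\mathbf{Sym})$, $(\mathbf{Trans})$. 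Condition $(\mathbf{Cong})$ comes similarly from the fact that $\approx$ is a congruence in every algebra of $\mathsf{K}$, i.e.\ $\bigcup_i \alpha_i\approx\beta_i \models_{\mathsf{K}} \bullet(\bar\alpha)\approx\bullet(\bar\beta)$, translated back into the logic. Finally $(\mathbf{Alg})$ is exactly the statement that $\rho$ and $\tau$ are mutually inverse up to interderivability/interprovability, which is part of the definition of equivalent algebraic semantics: $\alpha\dashv\vdash_{\mathbf{L}}\rho(\tau(\alpha))=\Delta(E(\alpha))$.

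For the converse, I would assume the five conditions and \emph{construct} the equivalent algebraic semantics. Define $\tau(\alpha):=E(\alpha)$ and, for a set of formulas $\Gamma$, $\tau[\Gamma]:=\bigcup_{\gamma\in\Gamma}\tau(\gamma)$; dually define $\rho$ from $\Delta$. One then shows that $\mathsf{K}:=\mathsf{Mod}^\ast(\mathbf{L})$ (or the quasivariety axiomatized by the algebraic counterpart of $\vdash_{\mathbf{L}}$) is an equivalent algebraic semantics, i.e.\ that (a) $\Gamma\vdash_{\mathbf{L}}\varphi$ iff $\tau[\Gamma]\models_{\mathsf{K}}\tau(\varphi)$ and (b) $\alpha\approx\beta \mathrel{=\!|\models_{\mathsf{K}}} \tau[\rho(\alpha\approx\beta)]$. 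The key technical device is that $(\mathbf{Ref})$, $(\mathbf{Sym})$, $(\mathbf{Trans})$, $(\mathbf{Cong})$ guarantee that the relation ``$\Delta(\alpha,\beta)\subseteq$ (a given theory/filter)'' is a congruence, which is precisely what is needed to push the translations through the Lindenbaum--Tarski-style quotient; and $(\mathbf{Alg})$ ensures the two translations are inverse to each other. The main obstacle -- though it is a well-trodden one -- is verifying the equivalence-of-consequence-relations clause in full: one must check that derivations in $\mathbf{L}$ are faithfully mirrored by $\mathsf{K}$-consequence and back, which requires the congruence property from $(\mathbf{Cong})$ to handle the inductive step over connectives and an application of $(\mathbf{Alg})$ together with $(\mathbf{Ref})$--$(\mathbf{Trans})$ to close the loop. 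Since this is the classical Blok--Pigozzi theorem, I would either cite it directly (it is exactly Theorem~4.7 of Blok--Pigozzi, or the formulation in \cite{Blok2014}) or sketch the Lindenbaum--Tarski argument above, and in the remainder of the paper only the ``if'' direction is actually used: to prove $\mathbf{L}_{\mathrm{QN4}}$ algebraizable it suffices to exhibit concrete $E(\alpha):=\{\alpha\approx\alpha\to\alpha\}$ and $\Delta(\alpha,\beta):=\{\alpha\leftrightarrow\beta\}$ (or the analogous choice with the strong equivalence) and check the five conditions by Hilbert-style derivations using the axioms and the Deduction Theorem.
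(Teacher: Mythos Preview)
The paper does not prove this theorem at all: it is stated without proof as the well-known syntactic (``intrinsic'') characterization of algebraizability due to Blok and Pigozzi, and is used purely as a black box to verify algebraizability of $\mathbf{L}_{\mathrm{QN4}}$ in the next theorem. Your outline is a correct sketch of the classical Blok--Pigozzi argument, and in fact one of the options you yourself propose---``cite it directly''---is exactly what the paper does; so there is no methodological difference to report.

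One small remark concerning your closing sentence (which really pertains to the \emph{application}, i.e.\ the next theorem, rather than to the present statement): your guess $\Delta(\alpha,\beta)=\{\alpha\leftrightarrow\beta\}$ is not the set the paper uses. Because the negation in QN4-lattices is not involutive, the weak biconditional alone does not suffice for $(\mathbf{Cong})$ with respect to $\nnot$; the paper takes the four-formula set
\[
\Delta(\alpha,\beta)=\{\alpha\to\beta,\ \beta\to\alpha,\ \nnot\alpha\to\nnot\beta,\ \nnot\beta\to\nnot\alpha\},
\]
together with $E(\alpha)=\{\alpha\approx\alpha\to\alpha\}$. Your parenthetical ``or the analogous choice with the strong equivalence'' is closer in spirit, but the paper works with the weak implication throughout and adds the negated clauses explicitly.
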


As is well known, the conditions $(\mathbf{Sym})$ e $(\mathbf{Trans})$ of in  Theorem \ref{algebraizable} can be replaced by condition $(\mathbf{MP}): \alpha, \Delta(\alpha, \beta) \vdash_{\mathbf{L}} \beta$.

We are going to see that 
$$E(\alpha) := \{ \alpha \approx \alpha \to \alpha \}$$
and 
$$
\Delta(\alpha, \beta) := \{\alpha \to \beta, \beta \to \alpha, \nnot \alpha \to \nnot \beta, \nnot \beta \to \nnot \alpha\}
$$
are, respectively, a set of defining equations and a set of equivalence formulas that witness the algebrai\-zability of $\mathbf{L}_{\mathrm{QN4}}$.

For an algebraizable logic $\mathbf{L}$, we say $\mathbf{L}$ is finitely algebraizable when the set of equivalence formulas is finite, and we say $\mathbf{L}$ is BP-algebraizable when it is finitely algebraizable and the set of defining equations is finite.

\begin{thm}
$\mathbf{L}_{\mathrm{QN4}}$ is BP-algebraizable.
\end{thm}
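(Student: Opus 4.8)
The plan is to verify the five conditions $(\mathbf{Ref})$, $(\mathbf{MP})$, $(\mathbf{Trans})$, $(\mathbf{Alg})$ and $(\mathbf{Cong})$ of Theorem~\ref{algebraizable} directly for the chosen $E(\alpha) = \{\alpha \approx \alpha \to \alpha\}$ and $\Delta(\alpha,\beta) = \{\alpha \to \beta,\ \beta \to \alpha,\ \nnot\alpha \to \nnot\beta,\ \nnot\beta \to \nnot\alpha\}$, using the Deduction Theorem and Lemma~\ref{lema} as the main working tools. Since $(\mathbf{Sym})$ and $(\mathbf{Trans})$ can be replaced by $(\mathbf{MP})$ and $(\mathbf{Trans})$ (the symmetry of $\Delta$ is built into its definition, as swapping $\alpha$ and $\beta$ permutes the four formulas), and since BP-algebraizability only additionally requires that both $\Delta$ and $E$ be finite — which is immediate by inspection — the whole content of the theorem reduces to checking these derivability facts in the Hilbert calculus.

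First I would dispatch the easy conditions. $(\mathbf{Ref})$ asks for $\vdash \alpha \to \alpha$, $\vdash \nnot\alpha \to \nnot\alpha$, which are two instances of Lemma~\ref{lema}(1). $(\mathbf{Trans})$ asks that from the eight formulas in $\Delta(\alpha,\beta) \cup \Delta(\beta,\gamma)$ one derive the four formulas of $\Delta(\alpha,\gamma)$: the two ``positive'' ones, $\alpha \to \gamma$ and $\gamma \to \alpha$, follow from Lemma~\ref{lema}(2) (transitivity of $\to$), and the two ``negated'' ones, $\nnot\alpha \to \nnot\gamma$ and $\nnot\gamma \to \nnot\alpha$, follow the same way from $\nnot\alpha \to \nnot\beta$, $\nnot\beta \to \nnot\gamma$ (resp.\ their converses). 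For $(\mathbf{MP})$ one must show $\alpha, \Delta(\alpha,\beta) \vdash \beta$, which is immediate since $\alpha \to \beta \in \Delta(\alpha,\beta)$ and we apply modus ponens. For $(\mathbf{Alg})$, the equivalence $\alpha \dashv\vdash \Delta(E(\alpha))$ unfolds to $\alpha \dashv\vdash \{\alpha \to (\alpha\to\alpha),\ (\alpha\to\alpha)\to\alpha,\ \nnot\alpha \to \nnot(\alpha\to\alpha),\ \nnot(\alpha\to\alpha)\to\nnot\alpha\}$; the direction $\alpha \vdash \Delta(E(\alpha))$ uses $\mathbf{Ax1}$ (to get $\alpha \to (\alpha\to\alpha)$ and also, after a short derivation, the negated companions via $\mathbf{Ax16}$/$\mathbf{Ax17}$ and the fact that $\alpha\to\alpha$ behaves as a ``top''), while $\Delta(E(\alpha)) \vdash \alpha$ uses modus ponens on $(\alpha\to\alpha)\to\alpha$ together with Lemma~\ref{lema}(1).

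The main obstacle is $(\mathbf{Cong})$: for each connective $\bullet \in \{\land,\lor,\to,\nnot\}$ one must derive $\Delta(\bullet(\bar\alpha),\bullet(\bar\beta))$ from $\bigcup_i \Delta(\alpha_i,\beta_i)$. For $\land$ and $\lor$ the ``positive'' half is the routine congruence argument available already in Positive Logic ($\mathbf{Ax3}$--$\mathbf{Ax8}$ with the Deduction Theorem), but the ``negated'' half requires real work: from $\nnot\alpha_i \to \nnot\beta_i$ one must produce $\nnot(\alpha_1 \land \alpha_2) \to \nnot(\beta_1 \land \beta_2)$, $\nnot(\alpha_1 \lor \alpha_2) \to \nnot(\beta_1 \lor \beta_2)$, etc., and this is precisely what axioms $\mathbf{Ax9}$, $\mathbf{Ax18}$--$\mathbf{Ax21}$ are designed to supply ($\mathbf{Ax21}$ handles $\land$ on both coordinates, $\mathbf{Ax9}$ rewrites $\nnot$ of a join as a meet of $\nnot$'s, $\mathbf{Ax19}$ gives commutativity). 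For $\to$ the negated half is handled by $\mathbf{Ax10}$, which reduces $\nnot(\alpha\to\beta)$ to $\nnot\nnot(\alpha\land\nnot\beta)$, after which one invokes congruence for $\land$ and $\nnot\nnot$ (the latter from $\mathbf{Ax17}$). For $\nnot$ itself, $\Delta(\nnot\alpha,\nnot\beta)$ requires $\nnot\alpha\to\nnot\beta$ and $\nnot\beta\to\nnot\alpha$ (already in $\Delta(\alpha,\beta)$) together with $\nnot\nnot\alpha\to\nnot\nnot\beta$ and its converse, obtained by applying $\mathbf{Ax17}$ to $\alpha\to\beta$ and $\beta\to\alpha$; the needed ``$\nnot\nnot\nnot\to\nnot$'' collapse is furnished by $\mathbf{Ax22}$ (with $\mathbf{Ax15}$ for the reverse inclusion). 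I would organize these as a sequence of short derived rules — ``monotonicity of $\nnot$ under $\land$ in each argument'', ``congruence of $\nnot\nnot$'', ``$\nnot$ distributes over $\lor$'' — each proved once via the Deduction Theorem, and then assemble the four connective cases from them, so that the bookkeeping stays manageable and it is visible exactly which axiom does which job.
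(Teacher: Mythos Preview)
Your plan matches the paper's proof almost exactly: verify $(\mathbf{Ref})$, $(\mathbf{MP})$, $(\mathbf{Alg})$, $(\mathbf{Cong})$ for the given $E$ and $\Delta$, using Lemma~\ref{lema} and the Deduction Theorem, and then observe finiteness of $E$ and $\Delta$. The paper does not bother with $(\mathbf{Trans})$ separately (it uses only the $(\mathbf{MP})$ shortcut), but your extra paragraph on it is harmless.

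One concrete point to fix in $(\mathbf{Alg})$: for the formula $\nnot(\alpha\to\alpha)\to\nnot\alpha$ your sketch cites $\mathbf{Ax16}/\mathbf{Ax17}$ and ``$\alpha\to\alpha$ behaves as a top''. That intuition does not translate: $\mathbf{Ax17}$ would give you $\nnot\nnot(\alpha\to\alpha)\to\nnot\nnot\alpha$ from $(\alpha\to\alpha)\to\alpha$, not the single-negation version you need. The paper derives $\vdash \nnot(\alpha\to\alpha)\to\nnot\alpha$ outright (without the premise $\alpha$) via the chain $\nnot(\alpha\to\alpha)\to\nnot\nnot(\alpha\land\nnot\alpha)\to(\nnot\nnot\alpha\land\nnot\nnot\nnot\alpha)\to\nnot\nnot\nnot\alpha\to\nnot\alpha$, using $\mathbf{Ax10}$, $\mathbf{Ax14}$, $\mathbf{Ax4}$, $\mathbf{Ax22}$. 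Apart from this, your handling of $(\mathbf{Cong})$ for each connective (including the reduction of the negated $\to$-case through $\mathbf{Ax10}$, $\mathbf{Ax14}$, $\mathbf{Ax17}$ and the use of $\mathbf{Ax21}$ for the negated $\land$-case, $\mathbf{Ax9}$ for the negated $\lor$-case) is exactly what the paper does.
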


\begin{proof}
In order to prove (\textbf{Ref}), it is necessary to show that $\vdash_{\mathbf{L}_{\mathrm{QN4}}} \{ \alpha \to \alpha, \nnot \alpha \to \nnot \alpha \}$, and it is Lemma \ref{lema}.1. (\textbf{MP}): $\alpha, \{\alpha \to \beta, \beta \to \alpha, \nnot \alpha \to \nnot \beta, \nnot \beta \to \nnot \alpha \} \vdash_{\mathbf{L}_{\mathrm{QN4}}} \beta$ is a straightforward consequence of modus ponens. As to (\textbf{Alg}), it suffices to prove that $\alpha \dashv \vdash_{\mathbf{L}_{\mathrm{QN4}}} \{\alpha \to (\alpha \to \alpha), (\alpha \to \alpha) \to \alpha, \nnot \alpha \to \nnot(\alpha \to \alpha), \nnot (\alpha \to \alpha) \to \nnot \alpha\}$. From right to left, thanks to Lemma \ref{lema}.1 and using MP, we infer the desired result. From left to right, we will prove that: (i) $\alpha \vdash_{\mathbf{L}_{\mathrm{QN4}}} \alpha \to (\alpha \to \alpha)$, we have it by instantiating \textbf{Ax1}; (ii) $\alpha \vdash_{\mathbf{L}_{\mathrm{QN4}}} (\alpha \to \alpha) \to \alpha$, follows from \textbf{Ax1} and MP; (iii) $\alpha \vdash_{\mathbf{L}_{\mathrm{QN4}}} \nnot \alpha \to \nnot (\alpha \to \alpha)$ is logical consequence of \textbf{Ax16} and modus ponens; (iv) $\alpha \vdash_{\mathbf{L}_{\mathrm{QN4}}} \nnot (\alpha \to \alpha) \to \nnot \alpha$, we have
\begin{table}[!ht]
\centering
\begin{tabular}{ll}
1. $\alpha$ & Premise \\ 
2. $\nnot (\alpha \to \alpha) \to \nnot \nnot (\alpha \land \nnot \alpha)$ & Ax10 ($\to$) \\
3. $\nnot \nnot (\alpha \land \nnot \alpha) \to (\nnot \nnot \alpha \land \nnot \nnot \nnot \alpha)$ & Ax14 ($\to$) \\
4. $\nnot (\alpha \to \alpha) \to (\nnot \nnot \alpha \land \nnot \nnot \nnot \alpha)$ & Lemma 1.2, 2, 3 \\
5. $(\nnot \nnot \alpha \land \nnot \nnot \nnot \alpha) \to \nnot \nnot \nnot \alpha$ & Ax4 \\
6. $\nnot (\alpha \to \alpha) \to \nnot \nnot \nnot \alpha$ & Lemma 1.2, 4, 5 \\
7. $\nnot \nnot \nnot \alpha \to \nnot \alpha$ & Ax22 \\
8. $\nnot (\alpha \to \alpha) \to \nnot \alpha$ & Lemma 1.2, 6, 7
\end{tabular} 
\end{table} 

As to (\textbf{Cong}), we need to prove for each connective $\bullet$  $\in \{ \nnot, \land, \lor, \to \}$.

For $(\nnot)$, we need to prove that:
\begin{align}
\{ \alpha \to \beta, \beta \to \alpha, \nnot \alpha \to \nnot \beta, \nnot \beta \to \nnot \alpha \} &\vdash_{\mathbf{L}_{\mathrm{QN4}}} \nnot \alpha \to \nnot \beta \\
\{ \alpha \to \beta, \beta \to \alpha, \nnot \alpha \to \nnot \beta, \nnot \beta \to \nnot \alpha \} &\vdash_{\mathbf{L}_{\mathrm{QN4}}} \nnot \beta \to \nnot \alpha \\
\{ \alpha \to \beta, \beta \to \alpha, \nnot \alpha \to \nnot \beta, \nnot \beta \to \nnot \alpha \} &\vdash_{\mathbf{L}_{\mathrm{QN4}}} \nnot \nnot \alpha \to \nnot \nnot \beta \\
\{ \alpha \to \beta, \beta \to \alpha, \nnot \alpha \to \nnot \beta, \nnot \beta \to \nnot \alpha \} &\vdash_{\mathbf{L}_{\mathrm{QN4}}} \nnot \nnot \beta \to \nnot \nnot \alpha
\end{align}

In (1) and (2), the conclusion follows directly from the premises. Also, in (3) and (4), the conclusion can be inferred from \textbf{Ax17} and MP.

Now consider the following sets, $\Gamma_{1} = \{ \alpha_{1} \to \beta_{1}, \beta_{1} \to \alpha_{1}, \nnot \alpha_{1} \to \nnot \beta_{1}, \nnot \beta_{1} \to \nnot \alpha_{1} \}$ and $\Gamma_{2} = \{ \alpha_{2} \to \beta_{2}, \beta_{2} \to \alpha_{2}, \nnot \alpha_{2} \to \nnot \beta_{2}, \nnot \beta_{2} \to \nnot \alpha_{2} \}$. 

For $(\land)$, we need to prove that:
\begin{align}
\Gamma_{1} \cup \Gamma_{2} &\vdash (\alpha_{1} \land \alpha_{2}) \to (\beta_{1} \land \beta_{2}) \\
\Gamma_{1} \cup \Gamma_{2} &\vdash (\beta_{1} \land \beta_{2}) \to (\alpha_{1} \land \alpha_{2}) \\
\Gamma_{1} \cup \Gamma_{2} &\vdash \nnot (\alpha_{1} \land \alpha_{2}) \to \nnot (\beta_{1} \land \beta_{2}) \\
\Gamma_{1} \cup \Gamma_{2} &\vdash \nnot (\beta_{1} \land \beta_{2}) \to \nnot (\alpha_{1} \land \alpha_{2})
\end{align}

The same reasoning from (5) will be used in (6), so we will only show item (5), see next page.

\begin{table}[!ht]
\centering
\begin{tabular}{ll}
1. $\alpha_{1} \to \beta_{1}$ & Premise \\ 
2. $\alpha_{2} \to \beta_{2}$ & Premise \\ 
3. $(\alpha_{1} \land \alpha_{2}) \to \alpha_{1}$ & Ax3 \\
4. $(\alpha_{1} \land \alpha_{2}) \to \beta_{1}$ & Lemma 1.2, 1, 3 \\
5. $(\alpha_{1} \land \alpha_{2}) \to \alpha_{2}$ & Ax4 \\
6. $(\alpha_{1} \land \alpha_{2}) \to \beta_{2}$ & Lemma 1.2, 2, 5 \\
7. $((\alpha_{1} \land \alpha_{2}) \to \beta_{1}) \to (((\alpha_{1} \land \alpha_{2}) \to \beta_{2}) \to ((\alpha_{1} \land \alpha_{2}) \to (\beta_{1} \land \beta_{2})))$ & Ax5 \\
8. $((\alpha_{1} \land \alpha_{2}) \to \beta_{2}) \to ((\alpha_{1} \land \alpha_{2}) \to (\beta_{1} \land \beta_{2}))$ & MP, 4, 7 \\
9. $(\alpha_{1} \land \alpha_{2}) \to (\beta_{1} \land \beta_{2})$ & MP, 6, 8
\end{tabular} 
\end{table} 

\newpage

The derivation of (7) and (8) are straightforward and make use of \textbf{Ax21} and MP.

For $(\lor)$, we need to prove that:
\begin{align}
\Gamma_{1} \cup \Gamma_{2} &\vdash (\alpha_{1} \lor \alpha_{2}) \to (\beta_{1} \lor \beta_{2}) \\
\Gamma_{1} \cup \Gamma_{2} &\vdash (\beta_{1} \lor \beta_{2}) \to (\alpha_{1} \lor \alpha_{2}) \\
\Gamma_{1} \cup \Gamma_{2} &\vdash \nnot (\alpha_{1} \lor \alpha_{2}) \to \nnot (\beta_{1} \lor \beta_{2}) \\
\Gamma_{1} \cup \Gamma_{2} &\vdash \nnot (\beta_{1} \lor \beta_{2}) \to \nnot (\alpha_{1} \lor \alpha_{2})
\end{align}

For (9) and (10), we use \textbf{Ax6}, \textbf{Ax7}, \textbf{Ax8} and MP for inferring the conclusions. The same reasoning from (11) will be used in (12), so we will only show item (11),

\begin{table}[!ht]
\centering
\begin{tabular}{ll}
1. $\nnot \alpha_{1} \to \nnot \beta_{1}$ & Premise \\
2. $\nnot \alpha_{2} \to \nnot \beta_{2}$ & Premise \\
3. $\overbrace{(\nnot \alpha_{1} \land \nnot \alpha_{2})}^{\varphi} \to \nnot \alpha_{1}$ & Ax3 \\
4. $(\nnot \alpha_{1} \land \nnot \alpha_{2}) \to \nnot \beta_{1}$ & Lemma 1.2, 1, 3 \\
5. $(\nnot \alpha_{1} \land \nnot \alpha_{2}) \to \nnot \alpha_{2}$ & Ax4 \\
6. $(\nnot \alpha_{1} \land \nnot \alpha_{2}) \to \nnot \beta_{2}$ & Lemma 1.2, 2, 5 \\
7. $(\varphi \to \nnot \beta_{1}) \to ((\varphi \to \nnot \beta_{2}) \to (\varphi \to (\nnot \beta_{1} \land \nnot \beta_{2})))$ & Ax5 \\
8. $(\varphi \to \nnot \beta_{2}) \to (\varphi \to (\nnot \beta_{1} \land \nnot \beta_{2}))$ & MP, 4, 7 \\
9. $(\nnot \alpha_{1} \land \nnot \alpha_{2}) \to (\nnot \beta_{1} \land \nnot \beta_{2})$ & MP, 6, 8 \\
10. $\nnot (\alpha_{1} \lor \alpha_{2}) \to (\nnot \alpha_{1} \land \nnot \alpha_{2})$ & Ax9 ($\to$) \\ 
11. $\nnot (\alpha_{1} \lor \alpha_{2}) \to (\nnot \beta_{1} \land \nnot \beta_{2})$ & Lemma 1.2, 9, 10 \\
12. $(\nnot \beta_{1} \land \nnot \beta_{2}) \to \nnot (\beta_{1} \lor \beta_{2})$ & Ax9 ($\leftarrow$) \\
13. $\nnot (\alpha_{1} \lor \alpha_{2}) \to \nnot (\beta_{1} \lor \beta_{2})$ & Lemma 1.2, 10, 11
\end{tabular} 
\end{table}

For $(\to)$, we need to prove that:
\begin{align}
\Gamma_{1} \cup \Gamma_{2} &\vdash (\alpha_{1} \to \alpha_{2}) \to (\beta_{1} \to \beta_{2}) \\
\Gamma_{1} \cup \Gamma_{2} &\vdash (\beta_{1} \to \beta_{2}) \to (\alpha_{1} \to \alpha_{2}) \\
\Gamma_{1} \cup \Gamma_{2} &\vdash \nnot (\alpha_{1} \to \alpha_{2}) \to \nnot (\beta_{1} \to \beta_{2}) \\
\Gamma_{1} \cup \Gamma_{2} &\vdash \nnot (\beta_{1} \to \beta_{2}) \to \nnot (\alpha_{1} \to \alpha_{2})
\end{align}

Lemma \ref{lema}.2 is used in (13) and (14) for inferring the conclusions. The same reasoning from (15) will be used in (16), so we will only show item (15), see next page.

\begin{table}[h]
\centering
\begin{tabular}{ll}
1. $\alpha_{1} \to \beta_{1}$ & Premise \\ 
2. $\nnot \alpha_{2} \to \nnot \beta_{2}$ & Premise \\
3. $(\alpha_{1} \to \beta_{1}) \to (\nnot \nnot \alpha_{1} \to \nnot \nnot \beta_{1})$ & Ax17 \\
4. $\nnot \nnot \alpha_{1} \to \nnot \nnot \beta_{1}$ & MP, 1, 3 \\
5. $(\nnot \nnot \alpha_{1} \land \nnot \nnot \nnot \alpha_{2}) \to \nnot \nnot \alpha_{1}$ & Ax3 \\
6. $\overbrace{(\nnot \nnot \alpha_{1} \land \nnot \nnot \nnot \alpha_{2}) \to \nnot \nnot \beta_{1}}^{\varphi}$ & Lemma 1.2, 4, 5 \\
7. $(\nnot \alpha_{2} \to \nnot \beta_{2}) \to (\nnot \nnot \nnot \alpha_{2} \to \nnot \nnot \nnot \beta_{2})$ & Ax17 \\
8. $\nnot \nnot \nnot \alpha_{2} \to \nnot \nnot \nnot \beta_{2}$ & MP, 2, 7 \\
9. $(\nnot \nnot \alpha_{1} \land \nnot \nnot \nnot \alpha_{2}) \to \nnot \nnot \nnot \alpha_{2}$ & Ax4 \\
10. $\overbrace{(\nnot \nnot \alpha_{1} \land \nnot \nnot \nnot \alpha_{2}) \to \nnot \nnot \nnot \beta_{2}}^{\psi}$ & Lemma 1.2, 8, 9 \\
11. $\varphi \to (\psi \to ((\nnot \nnot \alpha_{1} \land \nnot \nnot \nnot \alpha_{2}) \to (\nnot \nnot \beta_{1} \land \nnot \nnot \nnot \beta_{2})))$ & Ax5 \\
12. $\psi \to ((\nnot \nnot \alpha_{1} \land \nnot \nnot \nnot \alpha_{2}) \to (\nnot \nnot \beta_{1} \land \nnot \nnot \nnot \beta_{2}))$ & MP, 6, 11 \\
13. $(\nnot \nnot \alpha_{1} \land \nnot \nnot \nnot \alpha_{2}) \to (\nnot \nnot \beta_{1} \land \nnot \nnot \nnot \beta_{2})$ & MP, 10, 12 \\
14. $\nnot \nnot (\alpha_{1} \land \nnot \alpha_{2}) \to (\nnot \nnot \alpha_{1} \land \nnot \nnot \nnot \alpha_{2})$ & Ax14 ($\to$) \\
15. $\nnot \nnot (\alpha_{1} \land \nnot \alpha_{2}) \to (\nnot \nnot \beta_{1} \land \nnot \nnot \nnot \beta_{2})$ & Lemma 1.2, 13, 14 \\
16. $(\nnot \nnot \beta_{1} \land \nnot \nnot \nnot \beta_{2}) \to \nnot \nnot (\beta_{1} \land \nnot \beta_{2})$ & Ax14 ($\leftarrow$) \\
17. $\nnot \nnot (\alpha_{1} \land \nnot \alpha_{2}) \to \nnot \nnot (\beta_{1} \land \nnot \beta_{2})$ & Lemma 1.2, 15, 16 \\
18. $\nnot (\alpha_{1} \to \alpha_{2}) \to \nnot \nnot (\alpha_{1} \land \nnot \alpha_{2})$ & Ax10 ($\to$) \\
19. $\nnot (\alpha_{1} \to \alpha_{2}) \to \nnot \nnot (\beta_{1} \land \nnot \beta_{2})$ & Lemma 1.2, 17, 18 \\
20. $\nnot \nnot (\beta_{1} \land \nnot \beta_{2}) \to \nnot (\beta_{1} \to \beta_{2})$ & Ax10 ($\leftarrow$) \\
21. $\nnot (\alpha_{1} \to \alpha_{2}) \to \nnot (\beta_{1} \to \beta_{2})$ & Lemma 1.2, 19, 20
\end{tabular} 
\end{table}

\end{proof}
\newpage
Having proved that our calculus is algebraizable in the sense Blok and Pigozzi,  we have a corres\-ponding equivalent algebraic semantics $\mathrm{Alg}^{*}(\mathbf{L}_{\mathrm{QN4}})$ defined as follows.

\begin{dfn} 
 An $\mathrm{Alg}^{*}(\mathbf{L}_{\mathrm{QN4}})$-algebra is a structure $\mathbf{A} = \langle A; \land, \lor, \to, \nnot \rangle$ which satisfies the following equations and quasi-equations:
\begin{enumerate}
\item $E(\alpha)$ for each $\alpha \in \mathbf{Ax}$.
\item $E(\Delta(\alpha, \alpha))$.
\item $E(\Delta(\alpha, \beta))$ implies $\alpha \approx \beta$.
\item $E(\alpha)$ and $E(\alpha \to \beta)$ implies $E(\beta)$.
\end{enumerate}
\label{algebra}
\end{dfn}
\noindent As an example of the notation $E(\alpha)$ above, for each axiom $\varphi \in \mathbf{Ax}$, the class of algebras $\mathrm{Alg}^{*}(\mathbf{L}_{\mathrm{QN4}})$ must satisfy $\varphi \approx \varphi \to \varphi$. Taking $\mathbf{Ax1}$ as an example, the class $\mathrm{Alg}^{*}(\mathbf{L}_{\mathrm{QN4}})$ has \linebreak $\alpha \to (\beta \to \alpha) \approx (\alpha \to (\beta \to \alpha))\to (\alpha \to (\beta \to \alpha))$ as one of its equations.

\section{$\mathrm{Alg}^*(\mathbf{L}_{\mathrm{QN4}}) = \mathrm{QN4}$}
\label{semantics}

In order to prove that the class of algebras introduced in 	Definition \ref{algebra} is term-equivalent to the class of QN4-lattices
(Definition \ref{qn4l}), that is, $\mathrm{Alg}^*(\mathbf{L}_{\mathrm{QN4}}) = \mathrm{QN4}$, we have to prove that $\mathrm{Alg}^*(\mathbf{L}_{\mathrm{QN4}})$ satisfies all equations that axiomatize $\mathrm{QN4}$ and that $\mathrm{QN4}$ satisfies the equations (Definition \ref{algebra}.1 and Definition \ref{algebra}.2) and quasi-equations (Definition \ref{algebra}.3 and Definition \ref{algebra}.4) that axiomatize $\mathrm{Alg}^*(\mathbf{L}_{\mathrm{QN4}})$. This is the content of the next Proposition.

\begin{prop}
$\mathrm{Alg}^*(\mathbf{L}_{\mathrm{QN4}}) \subseteq \mathrm{QN4}$.
\label{ida}
\end{prop}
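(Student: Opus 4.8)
The plan is to show that any $\mathrm{Alg}^*(\mathbf{L}_{\mathrm{QN4}})$-algebra $\mathbf{A}$ satisfies each of the defining conditions \textbf{(QN4a)}--\textbf{(QN4e)} of Definition \ref{qn4l}, using the equational presentation of Proposition \ref{variety} wherever convenient to replace the non-equational clauses \textbf{(QN4b)} and \textbf{(QN4c)}. The crucial preliminary observation is that, for an algebraizable logic, the equivalence formulas $\Delta$ and the defining equations $E$ translate provable interderivabilities into valid (quasi-)identities of $\mathrm{Alg}^*(\mathbf{L}_{\mathrm{QN4}})$: concretely, if $\Phi \vdash_{\mathbf{L}_{\mathrm{QN4}}} \psi$ then the quasi-identity ``$E[\Phi]$ implies $E(\psi)$'' holds in $\mathrm{Alg}^*(\mathbf{L}_{\mathrm{QN4}})$, and $\alpha \dashv\vdash_{\mathbf{L}_{\mathrm{QN4}}} \beta$ yields $E(\alpha) \approx\!\!\Rightarrow E(\beta)$. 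In particular, since $E(\alpha) = \{\alpha \approx \alpha \to \alpha\}$, each axiom $\mathbf{Ax}i$ becomes an identity $\mathbf{Ax}i \approx |\mathbf{Ax}i|$ in $\mathbf{A}$, and modus ponens becomes the quasi-identity ``$\alpha \approx |\alpha|$ and $\alpha \to \beta \approx |\alpha \to \beta|$ imply $\beta \approx |\beta|$''. I would first record the dictionary: writing $\varphi \preceq \psi$ for $\varphi \to \psi \approx |\varphi \to \psi|$ exactly as in the preamble, a provable $\vdash_{\mathbf{L}_{\mathrm{QN4}}} \varphi \to \psi$ gives $\varphi \preceq \psi$ in $\mathbf{A}$, and a provable $\vdash \varphi \eq \psi$ gives $\varphi \equiv \psi$.

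Next I would verify the clauses one at a time. For \textbf{(QN4a)}, the positive axioms $\mathbf{Ax1}$--$\mathbf{Ax8}$ (plus MP) are those of positive logic $\mathbf{Lp}$, whose algebraic counterpart is the class of (the $\{\land,\lor,\to\}$-reducts of) relatively pseudocomplemented... more precisely Brouwerian-ordered semilattices; one extracts that $\langle A;\land,\lor\rangle$ is a lattice with the order $a \leq b \iff a \approx a \land b$, and distributivity follows since it is already derivable in $\mathbf{Lp}$ (or can be read off from $\mathbf{Ax12}$, $\mathbf{Ax13}$ combined with the lattice axioms). For \textbf{(QN4b)} and \textbf{(QN4c)} I would instead check the ten identities of Proposition \ref{variety}: each of them is either an equational consequence already available in $\mathbf{Lp}$ (items 1--3, 5, 6, 9 are essentially the Hilbert-Bernays/exchange laws and are theorems of $\mathbf{Lp}$, hence valid in $\mathbf{A}$ via the dictionary) or requires one of the negation axioms; item 4, $(x \Leftrightarrow y) \to x \approx (x \Leftrightarrow y) \to y$, should follow from the definition of $\Leftrightarrow$ together with $\mathbf{Ax3}$, $\mathbf{Ax4}$; item 10, $x \to y \preceq \nnot\nnot x \to \nnot\nnot y$, is exactly $\mathbf{Ax17}$ transported to $\mathbf{A}$. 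For \textbf{(QN4d)}, $\nnot(a \to b) \equiv \nnot\nnot(a \land \nnot b)$, this is $\mathbf{Ax10}$ read in $\mathbf{A}$. For \textbf{(QN4e)}: \textbf{(QN4e.1)} $a \leq \nnot\nnot a$ needs both $a \preceq \nnot\nnot a$ (from $\mathbf{Ax15}$) and $\nnot\nnot\nnot a \preceq \nnot\nnot a$... actually using \textbf{(QN4c)} one needs $a \preceq \nnot\nnot a$ and $\nnot\nnot\nnot a \preceq \nnot a$, the latter being $\mathbf{Ax22}$; \textbf{(QN4e.2)} $\nnot a = \nnot\nnot\nnot a$ combines $\mathbf{Ax22}$ (giving $\nnot\nnot\nnot a \leq \nnot a$) with $\mathbf{Ax15}$ applied to $\nnot a$ (giving $\nnot a \leq \nnot\nnot\nnot a$) and again \textbf{(QN4c)}; \textbf{(QN4e.3)} is $\mathbf{Ax9}$; \textbf{(QN4e.4)} is $\mathbf{Ax14}$.

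The main obstacle I anticipate is \textbf{not} any single negation axiom but rather the passage from the Hilbert-style interderivabilities to honest lattice-order (in)equalities in $\mathbf{A}$ --- that is, proving that $\preceq$, as interpreted in $\mathbf{A}$, genuinely behaves as required, and in particular that the lattice order $\leq$ of \textbf{(QN4a)} coincides with the order used implicitly in clauses like \textbf{(QN4e.3)} and \textbf{(QN4e.4)} (which are stated as identities, so harmless) versus \textbf{(QN4e.1)} and \textbf{(QN4c)} (which mention $\leq$). The delicate point is that to conclude an identity $s \approx t$ from $\vdash_{\mathbf{L}_{\mathrm{QN4}}} s \eq t$ one must invoke condition \textbf{(Alg)} / Definition \ref{algebra}.3: one shows $E(\Delta(s,t))$ holds in $\mathbf{A}$ --- i.e. all four formulas $s \to t$, $t \to s$, $\nnot s \to \nnot t$, $\nnot t \to \nnot s$ are "designated" ($\approx$ their own $|\cdot|$) --- and only then does the quasi-identity \ref{algebra}.3 deliver $s \approx t$. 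So for each target identity I must not merely cite the relevant axiom for $s \to t$ and $t \to s$, but also produce the two \emph{negated} implications; for the identities where $s,t$ are themselves negations (like \textbf{(QN4e.3)}, \textbf{(QN4e.4)}) this needs $\mathbf{Ax22}$-style reductions of $\nnot\nnot\nnot$ to $\nnot$, whereas for \textbf{(QN4e.1)} (an order statement, not an identity) the cleaner route is via \textbf{(QN4c)}, deducing $a \leq \nnot\nnot a$ from $a \preceq \nnot\nnot a$ and $\nnot\nnot\nnot a \preceq \nnot a$. I would therefore organize the proof as: (1) establish the translation lemma linking $\vdash$ to $\preceq$/$\equiv$/$\approx$ in $\mathbf{A}$; (2) dispatch \textbf{(QN4a)} from $\mathbf{Ax1}$--$\mathbf{Ax8}$; (3) dispatch the ten identities of Proposition \ref{variety}, noting which reduce to $\mathbf{Lp}$-theorems and which need $\mathbf{Ax9}$--$\mathbf{Ax22}$; (4) dispatch \textbf{(QN4d)} from $\mathbf{Ax10}$ and \textbf{(QN4e.1)}--\textbf{(QN4e.4)} from $\mathbf{Ax15}$, $\mathbf{Ax22}$, $\mathbf{Ax9}$, $\mathbf{Ax14}$ respectively, using \textbf{(QN4c)} to upgrade $\preceq$-facts to $\leq$-facts. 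The auxiliary axioms $\mathbf{Ax11}$--$\mathbf{Ax13}$ and $\mathbf{Ax16}$, $\mathbf{Ax18}$--$\mathbf{Ax21}$ are presumably needed to make the \emph{other} inclusion ($\mathrm{QN4} \subseteq \mathrm{Alg}^*$) and the congruence conditions work, so here I expect them to play only a supporting role (e.g. $\mathbf{Ax16}$ is needed to keep $\nnot|\alpha|$ controlled).
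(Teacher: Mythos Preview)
Your overall strategy matches the paper's: verify \textbf{(QN4a)}, \textbf{(QN4d)}, \textbf{(QN4e)} directly, and replace \textbf{(QN4b)}--\textbf{(QN4c)} by the identities of Proposition~\ref{variety}. Your translation dictionary between $\vdash_{\mathbf{L}_{\mathrm{QN4}}}$ and $\preceq$/$\equiv$/$\approx$ is correct, and your treatment of \textbf{(QN4d)} and \textbf{(QN4e)} is in line with the paper (indeed slightly more careful about \textbf{(QN4e.1)} than the paper itself is).

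There is, however, a genuine gap in your handling of \textbf{(QN4a)}, and it stems from precisely the obstacle you yourself flagged but did not carry through. You correctly note that to obtain an identity $s \approx t$ in $\mathbf{A}$ one must establish all four formulas of $\Delta(s,t)$, including $\nnot s \to \nnot t$ and $\nnot t \to \nnot s$, before invoking Definition~\ref{algebra}.3. But the lattice laws in \textbf{(QN4a)} are identities in $\mathbf{A}$, not merely in the quotient $\mathbf{A}/\!\equiv$. The positive axioms $\mathbf{Ax1}$--$\mathbf{Ax8}$ supply only the two unnegated implications; they cannot yield $\nnot(\alpha \land \alpha) \to \nnot\alpha$, $\nnot(\alpha \land \beta) \to \nnot(\beta \land \alpha)$, $\nnot(\alpha \land(\alpha\lor\beta)) \to \nnot\alpha$, or $\nnot(\alpha \land(\beta\land\gamma)) \eq \nnot((\alpha\land\beta)\land\gamma)$. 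The paper obtains these from $\mathbf{Ax18}$, $\mathbf{Ax19}$, $\mathbf{Ax20}$, $\mathbf{Ax11}$, $\mathbf{Ax12}$ (with $\mathbf{Ax9}$) respectively; without them the lattice laws hold only modulo $\equiv$. Thus your closing sentence --- that $\mathbf{Ax11}$--$\mathbf{Ax13}$ and $\mathbf{Ax18}$--$\mathbf{Ax21}$ ``are presumably needed to make the \emph{other} inclusion \ldots\ work, so here I expect them to play only a supporting role'' --- is the wrong diagnosis: these axioms are doing indispensable work in \emph{this} direction, lifting the lattice laws from $\mathbf{A}/\!\equiv$ up to $\mathbf{A}$. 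The same caveat applies (though the paper also waves its hands here) to the genuine $\approx$-identities among items 1, 3--6, 9 of Proposition~\ref{variety}: being $\mathbf{Lp}$-theorems secures only $\equiv$, and the negated halves must again be extracted from the negation axioms.
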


\begin{proof} 
For proving \textbf{QN4a}, we need to show that the idempotent, commutative, absorption, associative and distributive laws holds for every $\mathbf{A} \in \mathrm{Alg}^*(\mathbf{L}_{\mathrm{QN4}})$. 

\begin{enumerate}

\item Idempotent laws. 

For the law $x \land x = x$, we need to have that $(x \land x) \to x = |(x \land x) \to x|$, $x \to (x \land x) = |x \to (x \land x)|$, $\nnot (x \land x) \to \nnot x = |\nnot (x \land x) \to \nnot x|$ and $\nnot x \to \nnot (x \land x) = |\nnot x \to \nnot (x \land x)|$. In order to have these four equations in the algebra, we must prove in the logic the following four axioms:

\begin{enumerate}
\item $(\alpha \land \alpha) \to \alpha$, this is an instatiation of \textbf{Ax3}.

\item $\alpha \to (\alpha \land \alpha)$, is demonstrated using \textbf{Ax5}, Lemma 1.1 and MP.

\item $\nnot (\alpha \land \alpha) \to \nnot \alpha$, is demonstrated using \textbf{Ax20}, Lemma 1.1 and MP.

\item $\nnot \alpha \to \nnot (\alpha \land \alpha)$, this is an instantiation of \textbf{Ax18}.
\end{enumerate}

The same idea applies to $x \lor x = x$.

\item Commutative laws

For the law $x \land y = y \land x$, we have: 

\begin{enumerate}
\item $(\alpha \land \beta) \to (\beta \land \alpha)$
\begin{table}[!ht]
\centering
\begin{tabular}{ll}
1. $((\alpha \land \beta) \to \beta) \to (((\alpha \land \beta) \to \alpha) \to ((\alpha \land \beta) \to (\beta \land \alpha)))$ & Ax5 \\
2. $(\alpha \land \beta) \to \beta$ & Ax4 \\
3. $((\alpha \land \beta) \to \alpha) \to ((\alpha \land \beta) \to (\beta \land \alpha))$ & MP, 1, 2 \\
4. $(\alpha \land \beta) \to \alpha$ & Ax3 \\
5. $(\alpha \land \beta) \to (\beta \land \alpha)$ & MP, 3, 4
\end{tabular} 
\end{table} 

\item $(\beta \land \alpha) \to (\alpha \land \beta)$, this is an instantiation of previous item.

\item $\nnot (\alpha \land \beta) \to \nnot (\beta \land \alpha)$, this is \textbf{Ax19}.

\item $\nnot (\beta \land \alpha) \to \nnot (\alpha \land \beta)$, this is an instantiation of \textbf{Ax19}.
\end{enumerate}

The same idea applies to $x \lor y = y \lor x$.






\item Absorption laws.

For the law $x \land (x \lor y) = x$, we have:

\begin{enumerate}
\item $(\alpha \land (\alpha \lor \beta)) \to \alpha$, this is an instantiation of \textbf{Ax3}.

\item $\alpha \to (\alpha \land (\alpha \lor \beta))$
\begin{table}[!ht]
\centering
\begin{tabular}{ll}
1. $(\alpha \to \alpha) \to ((\alpha \to (\alpha \lor \beta)) \to (\alpha \to (\alpha \land (\alpha \lor \beta))))$ & Ax5 \\
2. $\alpha \to \alpha$ & Lemma 1.1 \\
3. $(\alpha \to (\alpha \lor \beta)) \to (\alpha \to (\alpha \land (\alpha \lor \beta)))$ & MP, 1, 2 \\
4. $\alpha \to (\alpha \lor \beta)$ & Ax6 \\
5. $\alpha \to (\alpha \land (\alpha \lor \beta))$ & MP, 3, 4
\end{tabular} 
\end{table} 

\newpage

\item $\nnot (\alpha \land (\alpha \lor \beta)) \to \nnot \alpha$
\begin{table}[!ht]
\centering
\begin{tabular}{ll}
1. $\nnot (\alpha \land (\alpha \lor \beta)) \to \nnot ((\alpha \land \alpha) \lor (\alpha \land \beta))$ & Ax12 ($\to$) \\
2. $\nnot ((\alpha \land \alpha) \lor (\alpha \land \beta)) \to (\nnot (\alpha \land \alpha) \land \nnot (\alpha \land \beta))$ & Ax9 ($\to$) \\
3. $\nnot (\alpha \land (\alpha \lor \beta)) \to (\nnot (\alpha \land \alpha) \land \nnot (\alpha \land \beta))$ & Lemma 1.2, 1, 2 \\
4. $(\nnot (\alpha \land \alpha) \land \nnot (\alpha \land \beta)) \to \nnot (\alpha \land \alpha)$ & Ax3 \\
5. $\nnot (\alpha \land (\alpha \lor \beta)) \to \nnot (\alpha \land \alpha)$ & Lemma 1.2, 3, 4 \\
6. $(\nnot \alpha \to \nnot \alpha) \to (\nnot (\alpha \land \alpha) \to \nnot \alpha)$ & Ax20 \\
7. $\nnot \alpha \to \nnot \alpha$ & Lemma 1.1 \\
8. $\nnot (\alpha \land \alpha) \to \nnot \alpha$ & MP, 6, 7 \\
9. $\nnot (\alpha \land (\alpha \lor \beta)) \to \nnot \alpha$ & Lemma 1.2, 5, 8
\end{tabular} 
\end{table} 

\item $\nnot \alpha \to \nnot (\alpha \land (\alpha \lor \beta))$, this is an instantiation of \textbf{Ax18}.

\end{enumerate}

The same idea applies to $x \lor (x \land y) = x$.






\item Associative laws.

For the law $x \land (y \land z) = (x \land y) \land z$, we have:

\begin{enumerate}

\item $(\alpha \land (\beta \land \gamma)) \to ((\alpha \land \beta) \land \gamma)$
\begin{table}[!ht]
\centering
\begin{tabular}{ll}
1. $((\alpha \land (\beta \land \gamma)) \to (\alpha \land \beta)) \to ((\alpha \land (\beta \land \gamma)) \to \gamma) \to ((\alpha \land (\beta \land \gamma)) \to ((\alpha \land \beta) \land \gamma)))$ & Ax5 \\
2. $(\alpha \land (\beta \land \gamma)) \to \alpha$ & Ax3 \\
3. $(\alpha \land (\beta \land \gamma)) \to \beta \land \gamma$ & Ax4 \\
4. $(\beta \land \gamma) \to \beta$ & Ax3 \\
5. $(\alpha \land (\beta \land \gamma)) \to \beta$ & Lemma 1.2, 3, 4 \\
6. $((\alpha \land (\beta \land \gamma)) \to \alpha) \to (((\alpha \land (\beta \land \gamma) \to \beta) \to (((\alpha \land (\beta \land \gamma)) \to (\alpha \land \beta)))$ & Ax5 \\
7. $((\alpha \land (\beta \land \gamma) \to \beta) \to (((\alpha \land (\beta \land \gamma)) \to (\alpha \land \beta))$ & MP, 2, 6 \\
8. $((\alpha \land (\beta \land \gamma)) \to (\alpha \land \beta)$ & MP, 5, 7 \\
9. $(\alpha \land (\beta \land \gamma)) \to \gamma) \to ((\alpha \land (\beta \land \gamma)) \to ((\alpha \land \beta) \land \gamma))$ & MP, 1, 8 \\
10. $(\beta \land \gamma) \to \gamma$ & Ax4 \\
11. $(\alpha \land (\beta \land \gamma)) \to \gamma$ & Lemma 1.2, 3, 10 \\
12. $(\alpha \land (\beta \land \gamma)) \to ((\alpha \land \beta) \land \gamma)$ & MP, 9, 11
\end{tabular} 
\end{table} 

\item $((\alpha \land \beta) \land \gamma) \to (\alpha \land (\beta \land \gamma))$
\begin{table}[!ht]
\centering
\begin{tabular}{ll}
1. $(((\alpha \land \beta) \land \gamma) \to \alpha) \to (((\alpha \land \beta) \land \gamma) \to (\beta \land \gamma)) \to (((\alpha \land \beta) \land \gamma) \to (\alpha \land (\beta \land \gamma))))$ & Ax5 \\
2. $((\alpha \land \beta) \land \gamma) \to (\alpha \land \beta)$ & Ax3 \\
3. $(\alpha \land \beta) \to \alpha$ & Ax3 \\
4. $((\alpha \land \beta) \land \gamma) \to \alpha$ & Lemma 1.2, 2, 3 \\
5. $((\alpha \land \beta) \land \gamma) \to (\beta \land \gamma)) \to (((\alpha \land \beta) \land \gamma) \to (\alpha \land (\beta \land \gamma)))$ & MP, 1, 4 \\
6. $((\alpha \land \beta) \land \gamma) \to \gamma$ & Ax4 \\
7. $(\alpha \land \beta) \to \beta$ & Ax4 \\
8. $((\alpha \land \beta) \land \gamma) \to \beta$ & Lemma 1.2, 2, 7 \\
9. $(((\alpha \land \beta) \land \gamma) \to \beta) \to (((\alpha \land \beta) \land \gamma) \to \gamma) \to (((\alpha \land \beta) \land \gamma) \to (\beta \land \gamma)))$ & Ax5 \\ 
10. $((\alpha \land \beta) \land \gamma) \to \gamma) \to (((\alpha \land \beta) \land \gamma) \to (\beta \land \gamma))$ & MP, 8, 9 \\
11. $((\alpha \land \beta) \land \gamma) \to (\beta \land \gamma)$ & MP, 6, 10 \\
12. $((\alpha \land \beta) \land \gamma) \to (\alpha \land (\beta \land \gamma))$ & MP 5, 11
\end{tabular} 
\end{table} 

\noindent \item $\nnot (\alpha \land (\beta \land \gamma)) \to \nnot ((\alpha \land \beta) \land \gamma)$, this is \textbf{Ax11} $(\to)$.

\item $\nnot ((\alpha \land \beta) \land \gamma) \to \nnot (\alpha \land (\beta \land \gamma))$, this is \textbf{Ax11} $(\leftarrow)$.

\end{enumerate}

The same idea applies to  $x \lor (y \lor z) = (x \lor y) \lor z$.






\item Distributive laws. 

Axioms $\mathbf{Ax1}$-$\mathbf{Ax8}$ of $\mathbf{L}_{\mathbf{QN4}}$ are the axioms of the Positive Logic and it is known that the distributive law holds in this logic. Distributive law and $\mathbf{Ax11}$ give us the distributivity in the lattice.


\centering

\end{enumerate}

Clearly, \textbf{QN4d} is axiom 10, \textbf{QN4e.1} is axiom 15, \textbf{QN4e.3} is axiom 9 and \textbf{QN4e.4} is axiom 14. For \textbf{QN4e.2}, that is, $\nnot a = \nnot \nnot \nnot a$, we have that $\nnot \nnot \nnot a \leq \nnot a$ by axiom 22. It remains to prove that $\nnot a \leq \nnot \nnot \nnot a$, this is an instantiation of axiom 15. Instead proving \textbf{QN4b} and \textbf{QN4c}, we can prove that $\mathrm{Alg}^*(\mathbf{L}_{\mathrm{QN4}})$ satisfies the equations of Proposition \ref{variety} and these proves are straightforward.

\end{proof}

\begin{prop}
$\mathrm{QN4} \subseteq \mathrm{Alg}^*(\mathbf{L}_{\mathrm{QN4}})$.
\end{prop}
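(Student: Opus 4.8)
The plan is to show that every QN4-lattice $\mathbf{A}$ satisfies the equations and quasi-equations (1)--(4) of Definition \ref{algebra}; since, by the algorithm behind that definition, these are exactly the conditions axiomatizing $\mathrm{Alg}^*(\mathbf{L}_{\mathrm{QN4}})$, this gives the inclusion. By the representation Theorem \ref{thmrep} we may assume that $\mathbf{A}$ is a quasi-N4 twist-structure over a nuclear Brouwerian algebra $\mathbf{B} = \langle B; \land, \lor, \to, \Box \rangle$. The first step is to record a few computations, writing $\top := a_{1} \to a_{1}$ for the (necessarily existing) top of $\mathbf{B}$. For $a = \langle a_{1}, a_{2} \rangle \in A$ one has $|a| = a \to a = \langle \top, \Box a_{1} \land a_{2} \rangle$, so $a = |a|$ holds iff $a_{1} = \top$; call $D := \{ a \in A : a_{1} = \top \}$ the set of these (``designated'') elements. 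The connectives act on first coordinates by $(\nnot a)_{1} = a_{2}$, $(a \land b)_{1} = a_{1} \land b_{1}$, $(a \lor b)_{1} = a_{1} \lor b_{1}$, $(a \to b)_{1} = a_{1} \to b_{1}$; hence $(\nnot \nnot a)_{1} = \Box a_{1}$ and, using the defining twist-structure identity $\Box a_{2} = a_{2}$, also $(\nnot \nnot \nnot a)_{1} = \Box a_{2} = a_{2}$. In particular, for any formula $\varphi$ and valuation $v$, the condition $E(\varphi)$ says exactly that the first coordinate of $v(\varphi)$ is $\top$; equivalently, $\pi_{1}$ is a homomorphism from the $\langle \land, \lor, \to \rangle$-reduct of $\mathbf{A}$ onto $\mathbf{B}$, and $E(\varphi)$ holds iff $\varphi$ takes the value $\top$ in $\mathbf{B}$ under the translated valuation.

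With these facts, conditions (2)--(4) are immediate. For (2), each $\delta \in \Delta(\alpha, \alpha)$ evaluates to $|v(\alpha)|$ or $|v(\nnot\alpha)|$, both in $D$. For (3), suppose $v(\delta) \in D$ for every $\delta \in \Delta(\alpha, \beta)$ and put $a = v(\alpha)$, $b = v(\beta)$; then $a_{1} \to b_{1} = b_{1} \to a_{1} = \top$ forces $a_{1} = b_{1}$, while the $\nnot$-clauses give $a_{2} \to b_{2} = b_{2} \to a_{2} = \top$, forcing $a_{2} = b_{2}$, so $v(\alpha) = v(\beta)$. (Equivalently one can argue directly from (QN4c): the four hypotheses read $a \preceq b$, $b \preceq a$, $\nnot a \preceq \nnot b$, $\nnot b \preceq \nnot a$, whence $a \leq b$ and $b \leq a$.) For (4), $v(\alpha) \in D$ and $v(\alpha \to \beta) \in D$ give $a_{1} = \top$ and $a_{1} \to b_{1} = \top$, forcing $b_{1} = \top$, i.e. $v(\beta) \in D$.

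The bulk of the work is condition (1): for each axiom scheme $\mathbf{Ax}k$ ($1 \leq k \leq 22$) and each valuation, the value must lie in $D$, i.e. have first coordinate $\top$. Here I would dispatch $\mathbf{Ax1}$--$\mathbf{Ax8}$ at one stroke, since they are theorems of positive logic and, $\pi_{1}$ being a $\langle \land, \lor, \to \rangle$-homomorphism onto the Brouwerian algebra $\mathbf{B}$, every positive-logic theorem evaluates to $\top$ there. For $\mathbf{Ax9}$--$\mathbf{Ax22}$ I would compute first coordinates directly: a biconditional $\varphi \eq \psi$ reduces to checking that $\varphi$ and $\psi$ have equal first coordinates, and an implicational axiom to checking an inequality $\leq$ between first coordinates of $\mathbf{B}$. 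These are short calculations using residuation, distributivity of $\mathbf{B}$, the nucleus laws ($a \leq \Box a = \Box \Box a$, $\Box(a \land b) = \Box a \land \Box b$), the derived fact that $x \to y$ is a $\Box$-fixed point whenever $y$ is, and the twist identity $\Box a_{2} = a_{2}$; for instance $\mathbf{Ax15}$ amounts to $a_{1} \leq \Box a_{1}$, $\mathbf{Ax9}$ to both sides having first coordinate $a_{2} \land b_{2}$, $\mathbf{Ax22}$ to $\Box a_{2} = a_{2}$, and the negation-heavy cases $\mathbf{Ax10}$, $\mathbf{Ax14}$, $\mathbf{Ax17}$, $\mathbf{Ax20}$, $\mathbf{Ax21}$ to routine manipulations of the same kind. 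I expect no conceptual obstacle: the difficulty is purely the length of this finite case analysis, the only points needing a little care being those negation axioms where one must use the closure of $\Box$-fixed points under $\to$ together with the twist-structure constraint $\Box a_{2} = a_{2}$, rather than merely the defining equations of $\mathbf{B}^{\Join}$.
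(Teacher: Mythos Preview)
Your proposal is correct and follows essentially the same route as the paper: both reduce everything to the twist-structure representation (Theorem~\ref{thmrep}) and verify condition~(1) by checking that each axiom has first coordinate $\top$ in the underlying Brouwerian algebra, dispatching $\mathbf{Ax1}$--$\mathbf{Ax8}$ wholesale via positive logic and computing the remaining axioms componentwise. The only minor divergence is that for conditions (2)--(4) the paper argues abstractly from Proposition~\ref{variety}.1 and (QN4c) rather than through first coordinates, whereas you do the twist computation directly (and mention the (QN4c) argument as an alternative for (3)); both arguments are equally short and yield the same result.
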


\begin{proof}
Let $\mathbf{A} \in \mathrm{QN4}$, and let $a, b, c \in A$ be generic elements.
By Theorem \ref{thmrep}, we assume that $\mathbf{A}$ is a twist-structure, and from now on we also denote $a = \langle a_1, a_2 \rangle, b = \langle b_1, b_2 \rangle, c = \langle c_1, c_2 \rangle $. Note that, proving $E(\alpha)$ for a given term $\alpha$ is equivalent to showing that $\pi_1(\alpha) = 1$. We shall use this observation without further notice throughout the proof.

It is very easy to see that the twist-structure definitions, together with the Brouwerian algebra pro\-perties, entail that $\pi_1(\mathbf{Axn}) = 1$  for $1 \leq n \leq 8$. In the case of $E(a \eq b)$, it is equivalent to prove that $\pi_1 (a \eq b) = 1$, which in turn is equivalent to proving $\pi_1 (a) = \pi_1 (b ) $, this is, $a_1 = b_1$. So,

\begin{itemize}

\item $E(\mathbf{Ax9}) = E(\nnot(a \lor b) \eq (\nnot a \land \nnot b))$

$\pi_1[\nnot(a \lor b)] = \pi_1[\nnot (\langle a_1, a_2 \rangle \lor \langle b_1, b_2 \rangle)] = \pi_1[\nnot \langle a_1 \lor b_1, a_2 \land b_2 \rangle] = \pi_1[\langle a_2 \land b_2, \Box(a_1 \lor b_1) \rangle] = a_2 \land b_2$. 

On the other hand, $\pi_1[\nnot a \land \nnot b] = \pi_1[\nnot \langle a_1, a_2 \rangle \land \nnot \langle b_1, b_2 \rangle] = \pi_1[\langle a_2, \Box a_1 \rangle \land \langle b_2, \Box b_1 \rangle] = \pi_1[\langle a_2 \land b_2, \Box(\Box a_1 \lor \Box b_1) \rangle] = a_2 \land b_2$.

\item $E(\mathbf{Ax10}) = E(\nnot (a \to b) \eq \nnot \nnot (a \land \nnot b))$

$\pi_1[\nnot (a \to b)] = \pi_1[\nnot (\langle a_1, a_2 \rangle \to \langle b_1, b_2 \rangle)] = \pi_1[\nnot \langle a_1 \to b_1, \Box a_1 \land b_2 \rangle] = \pi_1[\langle \Box a_1 \land b_2, \Box(a_1 \to b_1) \rangle] = \Box a_1 \land b_2$.

In contrast, $\pi_1[\nnot \nnot (a \land \nnot b)] = \pi_1[\nnot \nnot (\langle a_1, a_2 \rangle \land \nnot \langle b_1, b_2 \rangle)] = \pi_1[\nnot \nnot (\langle a_1, a_2 \rangle \land \langle b_2, \Box b_1 \rangle)] = \pi_1[\nnot \nnot \langle a_1 \land b_2, \Box(a_2 \lor \Box b_1) \rangle] = \pi_1[\nnot \langle \Box(a_2 \lor \Box b_1), \Box(a_1 \land b_2) \rangle] = \pi_1[\langle \Box(a_1 \land b_2), \Box(\Box(a_2 \lor \Box b_1)) \rangle] = \Box(a_1 \land b_2) = \Box a_1 \land \Box b_2 = \Box a_1 \land b_2$.

\item $E(\mathbf{Ax11}) = E(\nnot(a \land (b \land c)) \eq \nnot((a \land b) \land c))$

$\pi_1[\nnot(a \land (b \land c))] = \pi_1[\nnot(\langle a_1, a_2 \rangle \land (\langle b_1, b_2 \rangle \land \langle c_1, c_2 \rangle))] = \pi_1[\nnot(\langle a_1, a_2 \rangle \land \langle b_1 \land c_1, \Box(b_2 \lor c_2) \rangle)] = \pi_1[\nnot \langle a_1 \land (b_1 \land c_1), \Box(a_2 \lor \Box(b_2 \lor c_2)) \rangle] = \pi_1[\langle \Box(a_2 \lor \Box(b_2 \lor c_2)), \Box(a_1 \land (b_1 \land c_1)) \rangle] = \Box(a_2 \lor \Box(b_2 \lor c_2)) = \Box(a_2 \lor (b_2 \lor c_2)) = a_2 \lor (b_2 \lor c_2) = a_2 \lor b_2 \lor c_2$.

However, $\pi_1[\nnot((a \land b) \land c)] = \pi_1[\nnot((\langle a_1, a_2 \rangle \land \langle b_1, b_2 \rangle) \land \langle c_1, c_2 \rangle)] = \pi_1[\nnot(\langle a_1 \land b_1, \Box(a_2 \lor b_2) \rangle \land \langle c_1, c_2 \rangle)] = \pi_1[\nnot \langle a_1 \land b_1 \land c_1, \Box(\Box(a_2 \lor b_2) \lor c_2) \rangle] = \pi_1[\langle \Box(\Box(a_2 \lor b_2) \lor c_2), \Box((a_1 \land b_1 \land c_1)) \rangle] = \Box(\Box(a_2 \lor b_2) \lor c_2) = \Box((a_2 \lor b_2) \lor c_2) = (a_2 \lor b_2) \lor c_2 = a_2 \lor b_2 \lor c_2$.

\item $E(\mathbf{Ax12}) = E(\nnot(a \land (b \lor c)) \eq \nnot((a \land b) \lor (a \land c)))$

$\pi_1[\nnot(a \land (b \lor c))] = \pi_1[\nnot(\langle a_1, a_2 \rangle \land (\langle b_1, b_2 \rangle \lor \langle c_1, c_2 \rangle))] \pi_1[\nnot(\langle a_1, a_2 \rangle \land \langle b_1 \lor c_1, b_2 \land c_2 \rangle)] = \pi_1[\nnot \langle a_1 \land (b_1 \lor c_1), \Box(a_2 \lor (b_2 \land c_2)) \rangle] = \pi_1[\langle \Box(a_2 \lor (b_2 \land c_2)), \Box(a_1 \land (b_1 \lor c_1)) \rangle] = \Box(a_2 \lor (b_2 \land c_2)) = a_2 \lor (b_2 \land c_2)$.

On the other hand, $\pi_1[\nnot((a \land b) \lor (a \land c))] = \pi_1[\nnot((\langle a_1, a_2 \rangle \land \langle b_1, b_2 \rangle) \lor (\langle a_1, a_2 \rangle \land \langle c_1, c_2 \rangle))] = \pi_1[\nnot( \langle a_1 \land b_1, \Box(a_2 \lor b_2) \rangle \lor \langle a_1 \land c_1, \Box(a_2 \lor c_2) \rangle)] = \pi_1[\nnot \langle (a_1 \land b_1) \lor (a_1 \land c_1), \Box(a_2 \lor b_2) \land \Box(a_2 \lor c_2) \rangle] = \pi_1[\langle \Box(a_2 \lor b_2) \land \Box(a_2 \lor c_2), \Box((a_1 \land b_1) \lor (a_1 \land c_1)) \rangle] = \Box(a_2 \lor b_2) \land \Box(a_2 \lor c_2) = (a_2 \lor b_2) \land (a_2 \lor c_2) = a_2 \lor (b_2 \land c_2)$.

\item $E(\mathbf{Ax13}) = E(\nnot(a \lor (b \land c)) \eq \nnot((a \lor b) \land (a \lor c)))$

$\pi_1[\nnot(a \lor (b \land c))] = \pi_1[\nnot(\langle a_1, a_2 \rangle \lor (\langle b_1, b_2 \rangle \land \langle c_1, c_2 \rangle)] = \pi_1[\nnot( \langle a_1, a_2 \rangle \lor \langle b_1 \land c_1, \Box(b_2 \lor c_2) \rangle)] = \pi_1[\nnot \langle a_1 \lor (b_1 \land c_1), a_2 \land \Box(b_2 \lor c_2) \rangle] = \pi_1[\langle a_2 \land \Box(b_2 \lor c_2), \Box(a_1 \lor (b_1 \land c_1)) \rangle] = a_2 \land \Box(b_2 \lor c_2) = a_2 \land (b_2 \lor c_2)$.

However, $\pi_1[\nnot((a \lor b) \land (a \lor c))] = \pi_1[\nnot((\langle a_1, a_2 \rangle \lor \langle b_1, b_2 \rangle) \land (\langle a_1, a_2 \rangle \lor \langle c_1, c_2 \rangle))] = \pi_1[\nnot(\langle a_1 \lor b_1, a_2 \land b_2 \rangle \land \langle a_1 \lor c_1, a_2 \land c_2 \rangle)] = \pi_1[\nnot \langle (a_1 \lor b_1) \land (a_1 \lor c_1), \Box((a_2 \land b_2) \lor (a_2 \land c_2)) \rangle] = \pi_1[\langle \Box((a_2 \land b_2) \lor (a_2 \land c_2)), \Box((a_1 \lor b_1) \land (a_1 \lor c_1)) \rangle] = \Box((a_2 \land b_2) \lor (a_2 \land c_2)) = (a_2 \land b_2) \lor (a_2 \land c_2) = a_2 \land (b_2 \lor c_2)$.

\item $E(\mathbf{Ax14}) = E(\nnot \nnot(a \land b) \eq (\nnot \nnot a \land \nnot \nnot b))$

$\pi_1[\nnot \nnot(a \land b)] = \pi_1[\nnot \nnot(\langle a_1, a_2 \rangle \land \langle b_1, b_2 \rangle)] = \pi_1[\nnot \nnot \langle a_1 \land b_1, \Box(a_2 \lor b_2) \rangle] = \pi_1[\nnot \langle \Box(a_2 \lor b_2), \Box(a_1 \land b_1) \rangle] = \pi_1[\langle \Box(a_1 \land b_1), \Box(\Box(a_2 \lor b_2)) \rangle] = \Box(a_1 \land b_1) = \Box a_1 \land \Box b_1$.

In contrast, $\pi_1[\nnot \nnot a \land \nnot \nnot b] = \pi_1[\nnot \nnot \langle a_1, a_2 \rangle \land \nnot \nnot \langle b_1, b_2 \rangle] = \pi_1[\nnot \langle a_2, \Box a_1 \rangle \land \nnot \langle b_2, \Box b_1 \rangle] = \pi_1[\langle \Box a_1, \Box a_2 \rangle \land \langle \Box b_1, \Box b_2 \rangle] = \pi_1[\langle \Box a_1 \land \Box b_1, \Box(\Box a_2 \lor \Box b_2) \rangle] = \Box a_1 \land \Box b_1$.

Already in case of $E(a \to b)$ saying this is equivalent to proving that $\pi_1 (a) \leq \pi_1 (b) $, this is, $a_1 \leq b_1$.

\item $E(\mathbf{Ax15}) = E(a \to \nnot \nnot a)$

$\pi_1[a] = \pi_1[\langle a_1, a_2 \rangle] = a_1$.

On the other hand, $\pi_1[\nnot \nnot a] = \pi_1[\nnot \nnot \langle a_1, a_2 \rangle] = \pi_1[\nnot \langle a_2, \Box a_1 \rangle] = \pi_1[\langle \Box a_1, \Box a_2 \rangle] = \Box a_1$.

\item $E(\mathbf{Ax16}) = E(a \to (\nnot a \to \nnot (a \to a)))$

$\pi_1[a] = \pi_1[\langle a_1, a_2 \rangle] = a_1$.

However, $\pi_1[\nnot a \to \nnot(a \to a)] = \pi_1[\nnot \langle a_1, a_2 \rangle \to \nnot(\langle a_1, a_2 \rangle \to \langle a_1, a_2 \rangle)] = \pi_1[\langle a_2, \Box a_1 \rangle \to \nnot \langle a_1 \to a_1, \Box a_1 \land a_2 \rangle] = \pi_1[\langle a_2, \Box a_1 \rangle \to \langle \Box a_1 \land a_2, \Box(a_1 \to a_1) \rangle = \pi_1[\langle a_2 \to (\Box a_1 \land a_2), \Box a_2 \land \Box(a_1 \to a_1)] = a_2 \to (\Box a_1 \land a_2)$.

\item $E(\mathbf{Ax17}) = E((a \to b) \to (\nnot \nnot a \to \nnot \nnot b))$

$\pi_1[a \to b] = \pi_1[\langle a_1, a_2 \rangle \to \langle b_1, b_2 \rangle] = \pi_1[\langle a_1 \to b_1, \Box a_1 \land b_2 \rangle] = a_1 \to b_1$.

In contrast, $\pi_1[\nnot \nnot a \to \nnot \nnot b] = \pi_1[\nnot \nnot \langle a_1, a_2 \rangle \to \nnot \nnot \langle b_1, b_2 \rangle] = \pi_1[\nnot \langle a_2, \Box a_1 \rangle \to \nnot \langle b_2, \Box b_1 \rangle] = \pi_1[\langle \Box a_1, \Box a_2 \rangle \to \langle \Box b_1, \Box b_2 \rangle] = \pi_1[\langle \Box a_1 \to \Box b_1, \Box \Box a_1 \land \Box b_2 \rangle] = \Box a_1 \to \Box b_1$.

\item $E(\mathbf{Ax18}) = E(\nnot a \to \nnot(a \land b))$

$\pi_1[\nnot a] = \pi_1[\nnot \langle a_1, a_2 \rangle] = \pi_1[\langle a_2, \Box a_1 \rangle] = a_2$.

On the other hand, $\pi_1[\nnot(a \land b)] = \pi_1[\nnot(\langle a_1, a_2 \rangle \land \langle b_1, b_2 \rangle)] = \pi_1[\nnot \langle a_1 \land b_1, \Box(a_2 \lor b_2) \rangle] = \pi_1[\langle \Box(a_2 \lor b_2), \Box(a_1 \land b_1) \rangle] = \Box(a_2 \lor b_2) = a_2 \lor b_2$.

\item $E(\mathbf{Ax19}) = E(\nnot(a \land b) \to \nnot(b \land a))$

$\pi_1[\nnot(a \land b)] = \pi_1[\nnot(\langle a_1, a_2 \rangle \land \langle b_1, b_2 \rangle)] = \pi_1[\nnot \langle a_1 \land b_1, \Box(a_2 \lor b_2) \rangle] = \pi_1[\langle \Box(a_2 \lor b_2), \Box(a_1 \land b_1) \rangle] = \Box(a_2 \lor b_2) = a_2 \lor b_2$.

However, $\pi_1[\nnot(b \land a)] = \pi_1[\nnot(\langle b_1, b_2 \rangle \land \langle a_1, a_2 \rangle)] = \pi_1[\nnot \langle b_1 \land a_1, \Box(b_2 \lor a_2) \rangle] = \pi_1[\langle \Box(b_2 \lor a_2), \Box(b_1 \land a_1) \rangle] = \Box(b_2 \lor a_2) = b_2 \lor a_2$.

\item $E(\mathbf{Ax20}) = E((\nnot a \to \nnot b) \to (\nnot (a \land b) \to \nnot b))$

$\pi_1[\nnot a \to \nnot b] = \pi_1[\nnot \langle a_1, a_2 \rangle \to \nnot \langle b_1, b_2 \rangle] = \pi_1[\langle a_2, \Box a_1 \rangle \to \langle b_2, \Box b_1 \rangle] = \pi_1[\langle a_2 \to b_2, \Box a_2 \land \Box b_1 \rangle] = a_2 \to b_2$.

In contrast, $\pi_1[\nnot(a \land b) \to \nnot b] = \pi_1[\nnot(\langle a_1, a_2 \rangle \land \langle b_1, b_2 \rangle) \to \nnot \langle b_1, b_2 \rangle] = \pi_1[\nnot \langle a_1 \land b_1, \Box(a_2 \lor b_2) \rangle \to \langle b_2, \Box b_1 \rangle] = \pi_1[\langle \Box(a_2 \lor b_2), \Box(a_1 \land b_1) \rangle \to \langle b_2, \Box b_1 \rangle] = \pi_1[\langle \Box(a_2 \lor b_2) \to b_2, \Box\Box(a_2 \lor b_2) \land \Box b_1 \rangle] = \Box(a_2 \lor b_2) \to b_2 = (a_2 \lor b_2) \to b_2$.

\item $E(\mathbf{Ax21}) = E((\nnot a \to \nnot b) \to ((\nnot c \to \nnot \theta) \to (\nnot(a \land c) \to \nnot(b \land \theta))))$

$\pi_1[\nnot a \to \nnot b] = \pi_1[\nnot \langle a_1, a_2 \rangle \to \nnot \langle b_1, b_2 \rangle] = \pi_1[\langle a_2, \Box a_1 \rangle \to \langle b_2, \Box b_1 \rangle] = \pi_1[\langle a_2 \to b_2, \Box a_2 \land \Box b_1 \rangle] = a_2 \to b_2$.

On the other hand, $\pi_1[(\nnot c \to \nnot d) \to (\nnot(a \land c) \to \nnot(b \land d))] = \pi_1[(\nnot \langle c_1, c_2 \rangle \to \nnot \langle d_1, d_2 \rangle) \to (\nnot(\langle a_1, a_2 \rangle \land \langle c_1, c_2 \rangle) \to \nnot (\langle b_1, b_2 \rangle \land \langle d_1, d_2 \rangle)] = \pi_1[(\langle c_2, \Box c_1 \rangle \to \langle d_2, \Box d_1 \rangle) \to (\nnot \langle a_1 \land c_1, \Box(a_2 \lor c_2) \rangle \to \nnot \langle b_1 \land d_1, \Box(b_2 \lor d_2) \rangle)] = \pi_1[\langle c_2 \to d_2, \Box c_2 \land \Box d_1 \rangle \to (\langle \Box(a_2 \lor c_2), \Box(a_1 \land c_1) \rangle \to \langle \Box(b_2 \lor d_2), \Box(b_1 \land d_1) \rangle)] = \pi_1[\langle c_2 \to d_2, \Box c_2 \land \Box d_1 \rangle \to \langle \Box(a_2 \lor c_2) \to \Box(b_2 \lor d_2), \Box \Box(a_2 \lor c_2) \land \Box(b_1 \land d_1) \rangle] = \pi_1[\langle (c_2 \to d_2) \to (\Box(a_2 \lor c_2) \to \Box(b_2 \lor d_2)), \Box(c_2 \to d_2) \land (\Box \Box(a_2 \lor c_2) \land \Box(b_1 \land d_1)) \rangle] = (c_2 \to d_2) \to (\Box(a_2 \lor c_2) \to \Box(b_2 \lor d_2)) = (c_2 \to d_2) \to ((a_2 \lor c_2) \to (b_2 \lor d_2))$.

\item $E(\mathbf{Ax22}) = E(\nnot \nnot \nnot a \to \nnot a)$

$\pi_1[\nnot \nnot \nnot a] = \pi_1[\nnot \nnot \nnot \langle a_1, a_2 \rangle] = \pi_1[\nnot \nnot \langle a_2, \Box a_1 \rangle] = \pi_1[\nnot \langle \Box a_1, \Box a_2 \rangle] = \pi_1[\langle \Box a_2, \Box \Box a_1 \rangle] = \Box a_2 = a_2$.

However, $\pi_1[\nnot a] = \pi_1[\nnot \langle a_1, a_2 \rangle] = \pi_1[\langle a_2, \Box a_1 \rangle] = a_2$.
\end{itemize}

In \textbf{5.2}, we have to prove that $a \to a = (a \to a) \to (a \to a)$ and that $\nnot a \to \nnot a = (\nnot a \to \nnot a) \to (\nnot a \to \nnot a)$. Taking $|x| = y = (a \to a)$ in Proposition \ref{variety}.1, we have that $(a \to a) \to (a\to a) = a \to a$, that is what we wanted to prove. The same idea for negation.

In \textbf{5.3}, we have to prove that if $ a = a \to a$, $a \to b= (a \to b)\to(a \to b)$ then $b = b \to b$. Again, using Proposition \ref{variety}.1, taking $|x| = a \to a$ and $y = b$, we have that $(a \to a) \to b = b$, but we have that $a \to a = a$ and therefore $a \to b = b$, but as $a \to b= (a \to b)\to(a \to b)$ and  $a \to b = b$ we have that $b = b \to b$.

In \textbf{5.4}, we want to prove that if $a \to b = | a \to b |$, $b \to a = | b \to a |$, $\nnot a \to \nnot a = | \nnot a \to \nnot b |$, $\nnot b \to \nnot a = | \nnot b \to \nnot a |$, then $a = b$. As $a \to b = | a \to b |$ and $\nnot b \to \nnot a = | \nnot b \to \nnot a |$, we have $a \preceq b$ and  $\nnot b \preceq \nnot a$ and therefore by QN4c we conclude that $a \leq b$. We also have that $b \to a = | b \to a |$ and $\nnot a \to \nnot b = | \nnot a \to \nnot b |$ and therefore $b \preceq a$ and  $\nnot a \preceq \nnot b$ and again by QN4c we conclude that $b \leq a$. As $a \leq b$ and $b \leq a$ we have $a = b$ and this is what we wanted to prove.
\end{proof}

\begin{corollary} The class of $\mathrm{QN4}$-lattices and the class of $\mathrm{Alg}^*(\mathbf{L}_{\mathrm{QN4}})$-algebras coincide.
	\end{corollary}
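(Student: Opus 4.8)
The Corollary is immediate from the two preceding Propositions. Proposition~\ref{ida} gives the inclusion $\mathrm{Alg}^*(\mathbf{L}_{\mathrm{QN4}}) \subseteq \mathrm{QN4}$, and the Proposition stated just before the Corollary gives $\mathrm{QN4} \subseteq \mathrm{Alg}^*(\mathbf{L}_{\mathrm{QN4}})$; by antisymmetry of set inclusion the two classes coincide. Note that no nontrivial term-equivalence translation is needed here: the defining equations $E(\alpha)$ and the equivalence formulas $\Delta(\alpha,\beta)$ witnessing algebraizability were written using only the primitive connectives $\land,\lor,\to,\nnot$, so $\mathrm{Alg}^*(\mathbf{L}_{\mathrm{QN4}})$ and $\mathrm{QN4}$ are classes of algebras in the very same similarity type $\langle 2,2,2,1\rangle$, and set-theoretic equality of the two classes is exactly the asserted term-equivalence (with the identity translation in both directions).

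Thus the whole content sits in the two Propositions, and I would prove them as follows. For $\mathrm{Alg}^*(\mathbf{L}_{\mathrm{QN4}}) \subseteq \mathrm{QN4}$ I would take an arbitrary $\mathbf{A}\in\mathrm{Alg}^*(\mathbf{L}_{\mathrm{QN4}})$ and verify Definition~\ref{qn4l}: (QN4a) follows because $\mathbf{Ax1}$--$\mathbf{Ax8}$ axiomatize positive logic, so the $\{\land,\lor\}$-reduct is a distributive lattice, together with the De Morgan axioms $\mathbf{Ax11}$--$\mathbf{Ax13}$ and $\mathbf{Ax19}$ controlling the negated lattice terms; (QN4b) and (QN4c) I would not verify directly but instead replace by the equational presentation of Proposition~\ref{variety}; and (QN4d)--(QN4e) are read off essentially verbatim from $\mathbf{Ax10}$, $\mathbf{Ax15}$, $\mathbf{Ax9}$, $\mathbf{Ax14}$, and, for $\nnot x = \nnot\nnot\nnot x$, from $\mathbf{Ax22}$ together with an instance of $\mathbf{Ax15}$. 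For the reverse inclusion I would use the twist-structure representation of Theorem~\ref{thmrep}, so that every $\mathbf{A}\in\mathrm{QN4}$ may be assumed to be a subalgebra of some $\mathbf{B}^{\Join}$; since proving $E(\alpha)$ amounts to $\pi_1(\alpha)=1$, I would simply compute the first coordinate of each axiom of $\mathbf{Ax}$ in $\mathbf{B}^{\Join}$, using repeatedly that $\Box$ is a nucleus on $\mathbf{B}$ and fixes all second coordinates, and then check the two quasi-equations of Definition~\ref{algebra}.

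The step I expect to be the most delicate is Definition~\ref{algebra}.3, i.e.\ that $\Delta$-provability of $a\approx b$ forces $a=b$: one must show that $a\to b = |a\to b|$, $b\to a = |b\to a|$, $\nnot a\to\nnot b = |\nnot a\to\nnot b|$ and $\nnot b\to\nnot a = |\nnot b\to\nnot a|$ together imply $a=b$. The key is that the first and the fourth conjuncts say $a\preceq b$ and $\nnot b\preceq\nnot a$, whence $a\le b$ by (QN4c), and symmetrically $b\le a$; antisymmetry of the lattice order then gives $a=b$. Everything else in the two Propositions is a routine (if lengthy) unwinding of the twist-structure definitions, and once it is in place the Corollary is the one-line conclusion $\mathrm{QN4}=\mathrm{Alg}^*(\mathbf{L}_{\mathrm{QN4}})$.
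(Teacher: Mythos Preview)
Your proposal is correct and follows essentially the same route as the paper: the Corollary is stated without proof as the immediate consequence of the two Propositions, and your sketches of those Propositions (positive-logic axioms for the distributive lattice, replacing (QN4b)--(QN4c) by the equational list of Proposition~\ref{variety}, reading off (QN4d)--(QN4e) from $\mathbf{Ax9}$, $\mathbf{Ax10}$, $\mathbf{Ax14}$, $\mathbf{Ax15}$, $\mathbf{Ax22}$; twist-structure computations of $\pi_1$ for the reverse inclusion; and the use of (QN4c) plus antisymmetry for the quasi-equation of Definition~\ref{algebra}.3) match the paper's arguments point by point.
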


\section{Future Work} 
\label{future}

By way of conclusion, we mention below a few potential directions for future research.

1. As shown by Spinks and Veroff \cite{Spinks2018}, N4-lattices may be axiomatized in the language having (besides the lattice connectives) only the negation ($\nnot$), the monoid conjunction ($*$) and the strong implication ($\Rightarrow$) as primitive;  this observation entails that the logic of N4-lattices can be presented as a contraction-free relevance logic, and may therefore  be more easily compared with other members of this wide family of non-classical logics. \cite[Remark~3.5]{Rivieccio2022} suggests that, similarly, the class of QN4-lattices may also be axiomatized in the language $\{\land, \lor, *, \Rightarrow, \nnot \}$. Based on this, we conjecture that it must be possible to give an alternative axiomatization that allows one to view $\mathbf{L}_{\mathrm{QN4}}$  as a logic related to the ones in the relevance family. 

2. The algebraizability result established in the present paper entails that finitary extensions of $\mathbf{L}_{\mathrm{QN4}}$ are in one-to-one correspondence with sub-quasi-varieties of QN4-lattices, with the axiomatic extensions corresponding to sub-varieties. The issue of sub(quasi)varieties of QN4-lattices is barely mentioned in~\cite{Rivieccio2022} -- see for instance~\cite[Theorem~4.7]{Rivieccio2022}. We believe that a systematic study of sub-quasi-varieties of QN4-lattices (i.e.~of finitary extensions of $\mathbf{L}_{\mathrm{QN4}}$) would be an interesting project to be explored in the course of future research. 
 
3. The twist construction, which is well known to be fundamental in the study of (quasi-)Nelson algebras and of the corresponding logic, has been  used in a series of recent papers~\cite{Rivieccio2020,Nascimento2021,Rivieccio2021a,Rivieccio2021b,Rivieccio2021c} to cha\-racterize a number of subreducts of (quasi-)Nelson algebras (corresponding to propositional fragments of (quasi-)Nelson logic). We believe it would be interesting to investigate to which extent the study developed in the above-mentioned papers can be extended to the case of  quasi-N4-lattices and $\mathbf{L}_{\mathrm{QN4}}$, and in particular to address the question of whether the lack of the specific  properties of quasi-Nelson algebras (integrality, the presence of the lattice constants) create any significant  technical obstacle  with regards to this issue. 

\bibliographystyle{eptcs}

\bibliography{references}

\end{document}